\setlist[description]{%
  topsep=0pt,               
  itemsep=0pt,               
  labelwidth=0.0cm,
  labelindent=0pt,
  leftmargin=12pt
}
\setlist[enumerate]{%
  topsep=0pt,               
  itemsep=0pt,               
  labelwidth=0.0cm,
  labelindent=4pt,
  leftmargin=*
}
\setlist[itemize]{%
  topsep=0pt,               
  itemsep=0pt,               
  labelwidth=0.0cm,
  labelindent=4pt,
  leftmargin=*
}
 \theoremstyle{plain}
 \newtheorem{theorem}{Theorem}
 \newtheorem{lemma}[theorem]{Lemma}
 \theoremstyle{definition}
 \newtheorem{definition/}[theorem]{Definition}
 \newenvironment{definition}
   {
   
    \pushQED{\qed}\begin{definition/}
   }{
   \popQED\end{definition/}
   
   }
 \theoremstyle{remark}
\newtheorem{observation}[theorem]{Observation}
\newcommand{\req}{r}
\newcommand{\type}{\ensuremath{\tau}}
\newcommand{\VG}[1][\req]{\ensuremath{G_{#1}}}
\newcommand{\VV}[1][\req]{\ensuremath{V_{#1}}}
\newcommand{\VE}[1][\req]{\ensuremath{E_{#1}}}
\newcommand{\Vpaths}[1][\req]{\ensuremath{\mathcal{P}_r}}
\NewDocumentCommand{\VGP}{O{\req} O{i} O{j}}{\ensuremath{G^{#2,#3}_{#1}}}
\newcommand{\Vsource}[1][\req]{\ensuremath{o^+_{\req}}}
\newcommand{\Vsink}[1][\req]{\ensuremath{o^-_{\req}}}
\newcommand{\Vcap}[1][\req]{\ensuremath{c_{#1}}}
\newcommand{\Vlat}[1][\req]{\ensuremath{l_{#1}}}
\newcommand{\Slat}[1][S]{\ensuremath{l_{#1}}}
\newcommand{\VVloc}[1][i]{\ensuremath{V^{{#1}}_{S}}}
\newcommand{\VVlocForbidden}[1][i]{\ensuremath{\overline{V}^{{#1}}_{S}}}
\newcommand{\VEloc}[1][i,j]{\ensuremath{E^{{#1}}_{S}}}
\newcommand{\VElocForbidden}[1][i,j]{\ensuremath{\overline{E}^{{#1}}_{S}}}
\newcommand{\SG}{\ensuremath{G_S}}
\newcommand{\SV}{\ensuremath{V_S}}
\newcommand{\SE}{\ensuremath{E_S}}
\newcommand{\Scap}{\ensuremath{c_{S}}}
\newcommand{\map}[1][\req]{\ensuremath{m}}
\newcommand{\mapV}[1][\req]{\ensuremath{m_V}}
\newcommand{\mapE}[1][\req]{\ensuremath{m_E}}
\newcommand{\nosemic}{\renewcommand{\@endalgocfline}{\relax}}
\newcommand{\dosemic}{\renewcommand{\@endalgocfline}{\algocf@endline}}
\newcommand{\popline}{\Indm\dosemic}
\let\oldnl\nl
\newcommand{\nonl}{\renewcommand{\nl}{\let\nl\oldnl}}
\newcommand{\removelatexerror}{\let\@latex@error\@gobble}
\newcounter{ipCounter}
\NewDocumentEnvironment{IPFormulation}{m}{%
\refstepcounter{ipCounter}
\begin{algorithm}[#1]%

}{%
\end{algorithm}
\addtocounter{algocf}{-1}
}
\NewDocumentEnvironment{IPFormulationStar}{m}{%
\refstepcounter{ipCounter}
\begin{algorithm*}[#1]%

}{%
\end{algorithm*}
\addtocounter{algocf}{-1}
}
\DeclareDocumentCommand{\NodeG}{O{\req} O{\pi}}{\ensuremath{G^N_{#1,#2}}}
\DeclareDocumentCommand{\NodeV}{O{\req} O{\pi}}{\ensuremath{V^N_{#1,#2}}}
\DeclareDocumentCommand{\NodeE}{O{\req} O{\pi}}{\ensuremath{E^N_{#1,#2}}}
\DeclareDocumentCommand{\EdgeG}{O{\req} O{i} O{j} O{u}}{\ensuremath{G^E_{#1,#2,#3,#4}}}
\DeclareDocumentCommand{\EdgeV}{O{\req} O{i} O{j} O{u}}{\ensuremath{V^E_{#1,#2,#3,#4}}}
\DeclareDocumentCommand{\EdgeE}{O{\req} O{i} O{j} O{u}}{\ensuremath{E^E_{#1,#2,#3,#4}}}
\DeclareDocumentCommand{\VESD}{O{\req}}{\ensuremath{\overrightarrow{E}_{#1}}}
\DeclareDocumentCommand{\VEOD}{O{\req}}{\ensuremath{\overleftarrow{E}_{#1}}}
\DeclareDocumentCommand{\VESigmaD}{O{\req} O{\sigma}}{\ensuremath{E_{#1,#2}}}
\DeclareDocumentCommand{\NodeVRange}{O{\req} O{i} O{j}}{\ensuremath{V^N_{#1,#2,#3}}}
\DeclareDocumentCommand{\NodeVRangeRange}{O{\req} O{i} O{j} O{u} O{v}}{\ensuremath{V^N_{#1,#2,#3,#4,#5}}}
\DeclareDocumentCommand{\path}{O{k }}{\ensuremath{P_{#1}}}
\DeclareDocumentCommand{\cycle}{O{k }}{\ensuremath{C_{#1}}}
\DeclareDocumentCommand{\NodePaths}{O{\req}}{\ensuremath{\mathcal{P}_{#1}}}
\DeclareDocumentCommand{\loadV}{O{\req} O{u}}{\ensuremath{l_{#1,#2}}}
\DeclareDocumentCommand{\loadE}{O{\req} O{u} O{v}}{\ensuremath{l_{#1,#2,#3}}}
\DeclareDocumentCommand{\loadX}{O{\req} O{x}}{\ensuremath{l_{#1,#2}}}
\DeclareDocumentCommand{\decomp}{O{\req} O{k}}{\ensuremath{D_{#1}^{#2}}}
\DeclareDocumentCommand{\decompHat}{O{\req} O{k}}{\ensuremath{{\hat{D}}_{#1}^{#2}}}
\DeclareDocumentCommand{\load}{O{\req} O{k}}{\ensuremath{l_{#1}^{#2}}}
\DeclareDocumentCommand{\prob}{O{\req} O{k}}{\ensuremath{f_{#1}^{#2}}}
\DeclareDocumentCommand{\mapping}{O{\req} O{k}}{\ensuremath{m_{#1}^{#2}}}
\DeclareDocumentCommand{\loadHat}{O{\req} O{k}}{\ensuremath{\hat{l}_{#1}^{#2}}}
\DeclareDocumentCommand{\probHat}{O{\req} O{k}}{\ensuremath{\hat{f}_{#1}^{#2}}}
\DeclareDocumentCommand{\mappingHat}{O{\req} O{k}}{\ensuremath{\hat{m}_{#1}^{#2}}}
\DeclareDocumentCommand{\loadHat}{O{\req} O{k}}{\ensuremath{\hat{l}_{#1}^{#2}}}
\DeclareDocumentCommand{\probHat}{O{\req} O{k}}{\ensuremath{\hat{f}_{#1}^{#2}}}
\DeclareDocumentCommand{\mappingHat}{O{\req} O{k}}{\ensuremath{\hat{m}_{#1}^{#2}}}
\DeclareDocumentCommand{\randVarX}{O{\req} O{k}}{\ensuremath{X_{#1}^{#2}}}
\DeclareDocumentCommand{\randVarY}{O{\req}}{\ensuremath{Y_{#1}}}
\DeclareDocumentCommand{\randVarZ}{O{\req}}{\ensuremath{Z_{#1}}}
\DeclareDocumentCommand{\randVarL}{O{x}}{\ensuremath{L_{x}}}
\DeclareDocumentCommand{\randVarLX}{O{\req} O{x} O{y}}{\ensuremath{L_{#1,#2,#3}}}
\DeclareDocumentCommand{\randVarLNode}{O{\req} O{\type} O{u}}{\ensuremath{L_{#1,#2,#3}}}
\DeclareDocumentCommand{\randVarLEdge}{O{\req} O{u} O{v}}{\ensuremath{L_{#1,#2,#3}}}
\DeclareDocumentCommand{\randVarM}{O{\req}}{\ensuremath{M_{#1}}}
\DeclareDocumentCommand{\randVarC}{O{\req}}{\ensuremath{C_{#1}}}
\DeclareDocumentCommand{\ProbVarX}{O{1}}{\ensuremath{\mathbb{P}(\randVarX = #1)}}
\DeclareDocumentCommand{\ProbVarY}{O{1}}{\ensuremath{\mathbb{P}(\randVarY = #1)}}
\DeclareDocumentCommand{\ProbVarZ}{O{1}}{\ensuremath{\mathbb{P}(\randVarZ = #1)}}
\DeclareDocumentCommand{\ProbVarL}{O{1}}{\ensuremath{\mathbb{P}(\randVarL = #1)}}
\DeclareDocumentCommand{\ProbVarM}{O{1}}{\ensuremath{\mathbb{P}(\randVarM = #1)}}
\DeclareDocumentCommand{\ProbVarC}{O{1}}{\ensuremath{\mathbb{P}(\randVarC = #1)}}
\DeclareDocumentCommand{\WAC}{O{\req}}{\ensuremath{\textnormal{WC}_{\req}}}
\DeclareDocumentCommand{\PotEmbeddings}{O{\req}}{\ensuremath{\mathcal{D}_{#1}}}
\DeclareDocumentCommand{\PotEmbeddingsHat}{O{\req}}{\ensuremath{\hat{\mathcal{D}}_{#1}}}
\DeclareDocumentCommand{\maxLoadX}{O{x}}{\ensuremath{\textnormal{max}^{L,\sum}_{#1}}}
\DeclareDocumentCommand{\maxLoadV}{O{\req} O{\type} O{u}}{\ensuremath{\textnormal{max}^{L}_{#1,#2,#3}}}
\DeclareDocumentCommand{\maxLoadE}{O{\req} O{u} O{v}}{\ensuremath{\textnormal{max}^{L}_{#1,#2,#3}}}
\DeclareDocumentCommand{\maxLoadVSum}{O{\req} O{\type} O{u}}{\ensuremath{\textnormal{max}^{L,\Sigma}_{#1,#2,#3}}}
\DeclareDocumentCommand{\maxLoadESum}{O{u} O{v}}{\ensuremath{\textnormal{max}^{L,\Sigma}_{#1,#2}}}
\DeclareDocumentCommand{\VVroot}{O{\req}}{\ensuremath{s_{#1}}}
\DeclareDocumentCommand{\VVpred}{O{\req}}{\ensuremath{\pi_{#1}}}
\newcommand{\extractionOrderCharacter}{\ensuremath{\mathcal{X}}}
\DeclareDocumentCommand{\Cycles}{O{\req}}{\ensuremath{\mathcal{C}_{#1}}}
\DeclareDocumentCommand{\Paths}{O{\req}}{\ensuremath{\mathcal{P}_{#1}}}
\DeclareDocumentCommand{\VVcycleSource}{O{\req} O{k}}{\ensuremath{s^{C_{#2}}_{#1}}}
\DeclareDocumentCommand{\VVcycleTarget}{O{\req} O{k}}{\ensuremath{t^{C_{#2}}_{#1}}}
\DeclareDocumentCommand{\VVpathSource}{O{\req} O{k}}{\ensuremath{s^{P_{#2}}_{#1}}}
\DeclareDocumentCommand{\VVpathTarget}{O{\req} O{k}}{\ensuremath{t^{P_{#2}}_{#1}}}
\DeclareDocumentCommand{\VGcycle}{O{\req} O{k}}{\ensuremath{G^{\extractionOrderCharacter,C_{#2}}_{#1}}}
\DeclareDocumentCommand{\VVcycle}{O{\req} O{k}}{\ensuremath{V^{\extractionOrderCharacter,C_{#2}}_{#1}}}
\DeclareDocumentCommand{\VEcycle}{O{\req} O{k}}{\ensuremath{E^{\extractionOrderCharacter,C_{#2}}_{#1}}}
\DeclareDocumentCommand{\VEcycleSame}{O{\req} O{k}}{\ensuremath{\overrightarrow{E}^{C_{#2}}_{#1}}}
\DeclareDocumentCommand{\VEcycleDiff}{O{\req} O{k}}{\ensuremath{\overleftarrow{E}^{C_{#2}}_{#1}}}
\DeclareDocumentCommand{\VGcycleOrig}{O{\req} O{k}}{\ensuremath{G^{C_{#2}}_{#1}}}
\DeclareDocumentCommand{\VVcycleOrig}{O{\req} O{k}}{\ensuremath{V^{C_{#2}}_{#1}}}
\DeclareDocumentCommand{\VEcycleOrig}{O{\req} O{k}}{\ensuremath{E^{C_{#2}}_{#1}}}
\DeclareDocumentCommand{\VGpath}{O{\req} O{k}}{\ensuremath{G^{P_{#2}}_{#1}}}
\DeclareDocumentCommand{\VVpath}{O{\req} O{k}}{\ensuremath{V^{P_{#2}}_{#1}}}
\DeclareDocumentCommand{\VEpath}{O{\req} O{k}}{\ensuremath{E^{P_{#2}}_{#1}}}
\DeclareDocumentCommand{\VEpathSame}{O{\req} O{k}}{\ensuremath{\overrightarrow{E}^{P_{#2}}_{#1}}}
\DeclareDocumentCommand{\VEpathDiff}{O{\req} O{k}}{\ensuremath{\overleftarrow{E}^{P_{#2}}_{#1}}}
\DeclareDocumentCommand{\VEDiff}{O{\req}}{\ensuremath{\overleftarrow{E}^{\extractionOrderCharacter}_{#1}}}
\DeclareDocumentCommand{\VEcycles}{O{\req}}{\ensuremath{E^{\mathcal{C}}_{#1}}}
\DeclareDocumentCommand{\VEpaths}{O{\req}}{\ensuremath{E^{\mathcal{P}}_{#1}}}
\DeclareDocumentCommand{\VVcycleSourcesTargets}{O{\req}}{\ensuremath{V^{\mathcal{C},\pm}_{#1}}}
\DeclareDocumentCommand{\VVpathSourcesTargets}{O{\req}}{\ensuremath{V^{\mathcal{P},\pm}_{#1}}}
\DeclareDocumentCommand{\VVSourcesTargets}{O{\req}}{\ensuremath{V^{\pm}_{#1}}}
\DeclareDocumentCommand{\VVcycleSources}{O{\req}}{\ensuremath{V^{\mathcal{C},+}_{#1}}}
\DeclareDocumentCommand{\VVpathSources}{O{\req}}{\ensuremath{V^{\mathcal{P},+}_{#1}}}
\DeclareDocumentCommand{\VVcycleTargets}{O{\req}}{\ensuremath{V^{\mathcal{C},-}_{#1}}}
\DeclareDocumentCommand{\VVpathTargets}{O{\req}}{\ensuremath{V^{\mathcal{P},-}_{#1}}}
\DeclareDocumentCommand{\VGcycleBranchR}{O{\req} O{k}}{\ensuremath{G^{C_{#2}, B_1}_{#1}}}
\DeclareDocumentCommand{\VVcycleBranchR}{O{\req} O{k}}{\ensuremath{V^{C_{#2}, B_1}_{#1}}}
\DeclareDocumentCommand{\VEcycleBranchR}{O{\req} O{k}}{\ensuremath{E^{C_{#2}, B_1}_{#1}}}
\DeclareDocumentCommand{\VGcycleBranchL}{O{\req} O{k}}{\ensuremath{G^{C_{#2}, B_2}_{#1}}}
\DeclareDocumentCommand{\VVcycleBranchL}{O{\req} O{k}}{\ensuremath{V^{C_{#2}, B_2}_{#1}}}
\DeclareDocumentCommand{\VEcycleBranchL}{O{\req} O{k}}{\ensuremath{E^{C_{#2}, B_2}_{#1}}}
\DeclareDocumentCommand{\VGdecomp}{O{\req} O{k}}{\ensuremath{G^{\mathcal{D}}_{#1}}}
\DeclareDocumentCommand{\VVdecomp}{O{\req} O{k}}{\ensuremath{V^{\mathcal{D}}_{#1}}}
\DeclareDocumentCommand{\VEdecomp}{O{\req} O{k}}{\ensuremath{E^{\mathcal{D}}_{#1}}}
\DeclareDocumentCommand{\VVbranching}{O{\req} }{\ensuremath{\mathcal{B}_{#1}}}
\DeclareDocumentCommand{\VVbranchingcycle}{O{\req} O{k}}{\ensuremath{\mathcal{B}^{C_{#2}}_{#1}}}
\DeclareDocumentCommand{\VVbranchingpath}{O{\req} O{k}}{\ensuremath{\mathcal{B}^{P_{k}}_{#1}}}
\DeclareDocumentCommand{\VVjoin}{O{\req} }{\ensuremath{\mathcal{J}_{#1}}}
\DeclareDocumentCommand{\VVaggregation}{O{\req} }{\ensuremath{\mathcal{A}_{#1}}}
\DeclareDocumentCommand{\VGextcycle}{O{\req} O{k}}{\ensuremath{G^{C_{#2}}_{#1,\textnormal{ext}}}}
\DeclareDocumentCommand{\VVextcycle}{O{\req} O{k}}{\ensuremath{V^{C_{#2}}_{#1,\textnormal{ext}}}}
\DeclareDocumentCommand{\VVextcycleSources}{O{\req} O{k}}{\ensuremath{V^{C_{#2}}_{#1,+}}}
\DeclareDocumentCommand{\VVextcycleTargets}{O{\req} O{k}}{\ensuremath{V^{C_{#2}}_{#1,-}}}
\DeclareDocumentCommand{\VVextcycleSubstrate}{O{\req} O{k}}{\ensuremath{V^{C_{#2}}_{#1,S}}}
\DeclareDocumentCommand{\VEextcycle}{O{\req} O{k}}{\ensuremath{E^{C_{#2}}_{#1,\textnormal{ext}}}}
\DeclareDocumentCommand{\VEextcycleSources}{O{\req} O{k}}{\ensuremath{E^{C_{#2}}_{#1,+}}}
\DeclareDocumentCommand{\VEextcycleTargets}{O{\req} O{k}}{\ensuremath{E^{C_{#2}}_{#1,-}}}
\DeclareDocumentCommand{\VEextcycleSubstrate}{O{\req} O{k}}{\ensuremath{E^{C_{#2}}_{#1,S}}}
\DeclareDocumentCommand{\VEextcycleF}{O{\req} O{k}}{\ensuremath{E^{C_{#2}}_{#1,F}}}
\DeclareDocumentCommand{\VGextpath}{O{\req} O{k}}{\ensuremath{G^{{P_{#2}}}_{#1,\textnormal{ext}}}}
\DeclareDocumentCommand{\VVextpath}{O{\req} O{k}}{\ensuremath{V^{P_{#2}}_{#1,\textnormal{ext}}}}
\DeclareDocumentCommand{\VVextpathSources}{O{\req} O{k}}{\ensuremath{V^{P_{#2}}_{#1,+}}}
\DeclareDocumentCommand{\VVextpathTargets}{O{\req} O{k}}{\ensuremath{V^{P_{#2}}_{#1,-}}}
\DeclareDocumentCommand{\VVextpathSubstrate}{O{\req} O{k}}{\ensuremath{V^{P_{#2}}_{#1,S}}}
\DeclareDocumentCommand{\VEextpath}{O{\req} O{k}}{\ensuremath{E^{P_{#2}}_{#1,\textnormal{ext}}}}
\DeclareDocumentCommand{\VEextpathSources}{O{\req} O{k}}{\ensuremath{E^{P_{#2}}_{#1,+}}}
\DeclareDocumentCommand{\VEextpathTargets}{O{\req} O{k}}{\ensuremath{E^{P_{#2}}_{#1,-}}}
\DeclareDocumentCommand{\VEextpathSubstrate}{O{\req} O{k}}{\ensuremath{E^{P_{#2}}_{#1,S}}}
\DeclareDocumentCommand{\VEextpathF}{O{\req} O{k}}{\ensuremath{E^{P_{#2}}_{#1,F}}}
\DeclareDocumentCommand{\forest}{O{\req}}{\ensuremath{\mathcal{F}_{#1}}}
\DeclareDocumentCommand{\VGforest}{O{\req}}{\ensuremath{G^{\mathcal{A},\mathcal{F}}_{#1}}}
\DeclareDocumentCommand{\VVforest}{O{\req}}{\ensuremath{V^{\mathcal{A},\mathcal{F}}_{#1}}}
\DeclareDocumentCommand{\VEforest}{O{\req}}{\ensuremath{E^{\mathcal{A},\mathcal{F}}_{#1}}}
\DeclareDocumentCommand{\VGforestOrig}{O{\req}}{\ensuremath{G^{\mathcal{F}}_{#1}}}
\DeclareDocumentCommand{\VVforestOrig}{O{\req}}{\ensuremath{V^{\mathcal{F}}_{#1}}}
\DeclareDocumentCommand{\VEforestOrig}{O{\req}}{\ensuremath{E^{\mathcal{F}}_{#1}}}
\DeclareDocumentCommand{\varFlowInput}{O{\req} O{i} O{u}}{\ensuremath{f^+_{#1,#2,#3}}}
\DeclareDocumentCommand{\varFlowOutput}{O{\req} O{i} O{u}}{\ensuremath{f^+_{#1,#2,#3}}}
\DeclareDocumentCommand{\VEextcycleHorizontal}{O{\req} O{k} O{u} O{v}}{\ensuremath{E^{C_{#2}}_{#1,\textnormal{ext},#3,#4}}}
\DeclareDocumentCommand{\VEextpathHorizontal}{O{\req} O{k} O{u} O{v}}{\ensuremath{E^{P_{#2}}_{#1,\textnormal{ext},#3,#4}}}
\DeclareDocumentCommand{\VEextcycleVertical}{O{\req} O{k} O{\type} O{u}}{\ensuremath{E^{C_{#2}}_{#1,\textnormal{ext},#3,#4}}}
\DeclareDocumentCommand{\VEextpathVertical}{O{\req} O{k} O{\type} O{u}}{\ensuremath{E^{P_{#2}}_{#1,\textnormal{ext},#3,#4}}}
\DeclareDocumentCommand{\VEextCGHorizontal}{O{\req} O{u} O{v}}{\ensuremath{E^{\textnormal{ext,SCG}}_{#1,#2,#3}}}
\DeclareDocumentCommand{\VEextCGVertical}{O{\req} O{\type} O{u}}{\ensuremath{E^{\textnormal{ext,SCG}}_{#1,#2,#3}}}
\DeclareDocumentCommand{\VVextCGFlowNodes}{O{\req}}{\ensuremath{V^{\textnormal{ext,SCG}}_{#1,\textnormal{flow}}}}
\DeclareDocumentCommand{\VEextCGFlowEdges}{O{\req}}{\ensuremath{E^{\textnormal{ext,SCG}}_{#1,\textnormal{flow}}}}
\DeclareDocumentCommand{\VGextcycleFlow}{O{\req} O{k}}{\ensuremath{G^{C_{#2}}_{#1,\textnormal{ext},f}}}
\DeclareDocumentCommand{\VGextcycleFlowBranchR}{O{\req} O{k}}{\ensuremath{G^{C_{#2},{B}_1}_{#1,\textnormal{ext},f}}}
\DeclareDocumentCommand{\VVextcycleFlowBranchR}{O{\req} O{k}}{\ensuremath{V^{C_{#2},{B}_1}_{#1,\textnormal{ext},f}}}
\DeclareDocumentCommand{\VEextcycleFlowBranchR}{O{\req} O{k}}{\ensuremath{E^{C_{#2},{B}_1}_{#1,\textnormal{ext},f}}}
\DeclareDocumentCommand{\VGextcycleFlowBranchL}{O{\req} O{k}}{\ensuremath{G^{C_{#2},{B}_2}_{#1,\textnormal{ext},f}}}
\DeclareDocumentCommand{\VVextcycleFlowBranchL}{O{\req} O{k}}{\ensuremath{V^{C_{#2},{B}_2}_{#1,\textnormal{ext},f}}}
\DeclareDocumentCommand{\VEextcycleFlowBranchL}{O{\req} O{k}}{\ensuremath{E^{C_{#2},{B}_2}_{#1,\textnormal{ext},f}}}
\DeclareDocumentCommand{\VGextcycleBranchR}{O{\req} O{k}}{\ensuremath{G^{C_{#2},{B}_1}_{#1,\textnormal{ext}}}}
\DeclareDocumentCommand{\VVextcycleBranchR}{O{\req} O{k}}{\ensuremath{V^{C_{#2},{B}_1}_{#1,\textnormal{ext}}}}
\DeclareDocumentCommand{\VEextcycleBranchR}{O{\req} O{k}}{\ensuremath{E^{C_{#2},{B}_1}_{#1,\textnormal{ext}}}}
\DeclareDocumentCommand{\VGextcycleBranchL}{O{\req} O{k}}{\ensuremath{G^{C_{#2},{B}_2}_{#1,\textnormal{ext}}}}
\DeclareDocumentCommand{\VVextcycleBranchL}{O{\req} O{k}}{\ensuremath{V^{C_{#2},{B}_2}_{#1,\textnormal{ext}}}}
\DeclareDocumentCommand{\VEextcycleBranchL}{O{\req} O{k}}{\ensuremath{E^{C_{#2},{B}_2}_{#1,\textnormal{ext}}}}
\DeclareDocumentCommand{\VGextpathFlow}{O{\req} O{k}}{\ensuremath{G^{P_{#2}}_{#1,\textnormal{ext},f}}}
\DeclareDocumentCommand{\VVextpathFlow}{O{\req} O{k}}{\ensuremath{V^{P_{#2}}_{#1,\textnormal{ext},f}}}
\DeclareDocumentCommand{\VEextpathFlow}{O{\req} O{k}}{\ensuremath{E^{P_{#2}}_{#1,\textnormal{ext},f}}}
\DeclareDocumentCommand{\VVKSource}{O{\req} O{K}}{\ensuremath{s^{K}_{#1}}}
\DeclareDocumentCommand{\VVKTarget}{O{\req} O{K}}{\ensuremath{t^{K}_{#1}}}
\DeclareDocumentCommand{\VVKSourcesTargets}{O{\req}}{\ensuremath{V^{K,\pm}_{#1}}}
\DeclareDocumentCommand{\VEbfsPre}{O{j} O{\req}}{\ensuremath{E^{\extractionOrderCharacter,\mathrm{pre}}_{#2,#1}}}
\DeclareDocumentCommand{\VEbfsSuc}{O{i} O{\req}}{\ensuremath{E^{\extractionOrderCharacter,\mathrm{suc}}_{#2,#1}}}
\DeclareDocumentCommand{\VEbfsInter}{O{i} O{j} O{\req}}{\ensuremath{E^{\extractionOrderCharacter}_{#3,#1\leadsto #2}}}
\DeclareDocumentCommand{\VEbfsLabels}{O{e} O{\req} }{\ensuremath{\mathcal{L}^{\extractionOrderCharacter}_{#2,#1}}}
\DeclareDocumentCommand{\VEbfsLabelsOrig}{O{e} O{\req} }{\ensuremath{\mathcal{L}_{#2,#1}}}
\DeclareDocumentCommand{\VEbfsAC}{O{i} O{j} O{\req}}{\ensuremath{C^{\extractionOrderCharacter}_{#1,#2}}}
\DeclareDocumentCommand{\VEbfsACL}{O{i} O{j} O{\req} }{\ensuremath{P^{1}_{#1,#2}}}
\DeclareDocumentCommand{\VEbfsACR}{O{i} O{j} O{\req} }{\ensuremath{P^{2}_{#1,#2}}}
\DeclareDocumentCommand{\VEbfsBags}{O{i} O{\req} }{\ensuremath{\mathcal{B}^{+}_{#2,#1}}}
\DeclareDocumentCommand{\VEbfsBagIterator}{}{\ensuremath{b}}
\DeclareDocumentCommand{\VEbfsBag}{O{\VEbfsBagIterator}}{\ensuremath{B^{\extractionOrderCharacter}_{#1}}}
\DeclareDocumentCommand{\deltaMinusA}{O{i}}{\ensuremath{\delta^-_{\extractionOrderCharacter}(#1)}}
\DeclareDocumentCommand{\deltaPlusA}{O{i}}{\ensuremath{\delta^+_{\extractionOrderCharacter}(#1)}}
\DeclareDocumentCommand{\deltaMinusA}{O{i}}{\ensuremath{\delta^-_{\extractionOrderCharacter}(#1)}}
\DeclareDocumentCommand{\deltaPlusA}{O{i}}{\ensuremath{\delta^+_{\extractionOrderCharacter}(#1)}}
\newcommand{\varStyle}[1]{\textnormal{\textrm{\textbf{#1}}}}
\def\RemoveSpaces#1{\zap@space#1 \@empty}
\DeclareDocumentCommand{\var}{u{ } u{ }}{\text{$\langle$\,\varStyle{\RemoveSpaces{#1}}\,\ensuremath{|}\varStyle{\,\RemoveSpaces{#2}\,}$\rangle$}}
\newcommand{\compP}{\ensuremath{\mathcal{P}}}
\newcommand{\compNP}{\ensuremath{\mathcal{NP}}}
\newcommand{\compPeqNP}{\ensuremath{\compP{\,=\,}\compNP}}
\newcommand{\NPhard}{\ensuremath{\compNP\text{-hard}}}
\newcommand{\NPcomplete}{\ensuremath{\compNP\text{-complete}}}
\newcommand{\NPhardness}{\ensuremath{\compNP\text{-hardness}}}
\newcommand{\NPcompleteness}{\ensuremath{\compNP\text{-completeness}}}
\newcommand{\VNEP}{\ensuremath{\textsc{VNEP}}}
\newcommand{\DirEDPwC}{\ensuremath{\textsc{DirEDPwC}}}
\newcommand{\ThreeSAT}{\ensuremath{\text{3-}\textsc{SAT}}}
\newcommand{\SpecialThreeSAT}{\ensuremath{\textsc{4P3C\text{-3-}SAT}}}
\begin{document}
\IEEEoverridecommandlockouts
\IEEEpubid{\makebox[\columnwidth]{A shorter version of this paper will be presented at IFIP Networking 2018.~\hfill}\hspace{\columnsep}\makebox[\columnwidth]{}}

\title{\huge $\mathcal{NP}$-Completeness and Inapproximability of the \\Virtual Network Embedding Problem and Its Variants}

\author{\IEEEauthorblockN{Matthias Rost}
\IEEEauthorblockA{Technische Universit\"at Berlin\\
Email: mrost@inet.tu-berlin.de}
\and
\IEEEauthorblockN{Stefan Schmid}
\IEEEauthorblockA{University of Vienna\\
Email: stefan\_schmid@univie.ac.at}
}

\maketitle

\begin{abstract}
Many resource allocation problems in the cloud can
be described as a basic Virtual Network Embedding Problem ($\VNEP$):
the problem of finding a mapping of a 
\emph{request graph} (describing a workload)
onto a \emph{substrate graph} (describing the 
physical infrastructure). Applications range from
mapping testbeds (from where the problem originated), 
over the embedding of batch-processing workloads (virtual clusters)
to the embedding of service function chains. 
The different applications come with their own specific requirements and constraints,
including node mapping constraints, routing policies,
and latency constraints.
While the $\VNEP$ has been studied intensively over the last years, complexity results
are only known for specific models and we lack a comprehensive understanding of its hardness. 

This paper charts the complexity landscape of the $\VNEP$ by providing
a systematic analysis of the hardness of a wide range of VNEP variants, 
using a unifying and rigorous proof framework. 
In particular, we show that the problem of finding a feasible embedding is $\NPcomplete$ in general, and, hence, the $\VNEP$ cannot be approximated \emph{under any objective}, unless $\compPeqNP$ holds. Importantly, we derive $\NPcompleteness$ results also for finding approximate embeddings, which may violate, e.g., capacity constraints by certain factors. Lastly, we prove that our results still pertain when restricting the request graphs to planar or degree-bounded graphs.
\end{abstract}

\section{Introduction}

At the heart of the cloud computing paradigm lies the idea of 
efficient resource sharing: due to virtualization, 
multiple workloads can
co-habit and use a given resource infrastructure simultaneously.
Indeed, cloud computing introduces great flexibilities in terms of
\emph{where} workloads can be mapped. At the same time, exploiting this 
mapping flexibility poses a fundamental algorithmic challenge.
In particular, in order to provide predictable performance, guarantees on all, i.e. 
node and edge, resources need to be ensured. Indeed, it has been shown that cloud application performance can suffer significantly from interference on the communication network~\cite{talkabout}.

The underlying algorithmic problem is essentially a graph theoretical one:
both the workload as well as the infrastructure can be modeled as \emph{graphs}.
The former, the so-called \emph{request graph}, describes the resource requirements
both on the nodes (e.g., the virtual machines) as well as on the interconnecting network.
The latter, the so-called \emph{substrate graph}, describes the physical infrastructure
and its resources (servers and links). Figure~\ref{fig:vnep-example} depicts an example of embedding a request graph.

The problem is known in the networking community under the name
\emph{Virtual Network Embedding Problem} ($\VNEP$) and has been studied
intensively for over a decade~\cite{vnep,vnep-survey}. Besides the rather general study of the $\VNEP$, which emerged originally from the study of testbed provisioning, 
essentially the same problems are considered in the context of Service Function Chaining~\cite{mehraghdam2014specifying,rfc7665}, as well as in the context of embeddings Virtual Clusters, a specific batch processing request abstraction~\cite{ballani2011towards,ccr15emb}.

\begin{figure}[tb!]
\centering
\includegraphics[width=0.9\columnwidth]{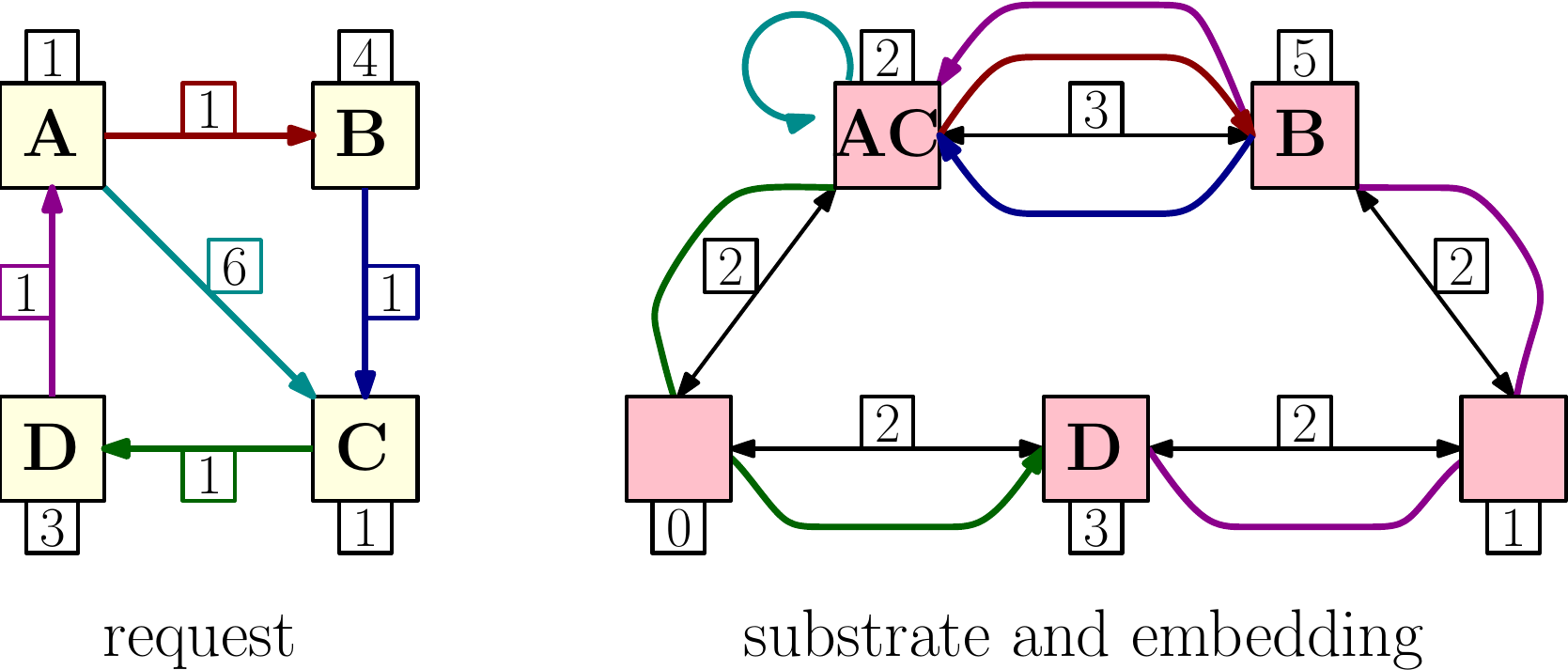}
\caption{Example of embedding a request (left) on a substrate network (right). The numeric labels at the network elements denote the demanded resources and the available capacity, respectively. In the embedding, request nodes \textbf{A} and \textbf{C} are collocated on the same substrate node, such that the request edge $(\textnormal{\textbf{A}},\textnormal{\textbf{C}})$ does not use any substrate edge resources.}

\label{fig:vnep-example}
\end{figure}

\begin{table*}[bth]

\setlength{\tabcolsep}{0.1em}
\newcommand\RotHeader[1]{\rotatebox{90}{\parbox{2.4cm}{\centering#1}}}

\caption{Overview on results obtained in this paper.}
\label{tab:results-overview-real}
\newcolumntype{P}[1]{>{\centering\arraybackslash}p{#1}}
\newcolumntype{L}[1]{>{\arraybackslash}p{#1}}
\newcolumntype{R}[1]{>{\raggedleft\arraybackslash}p{#1}}
\renewcommand{\arraystretch}{1.2}
\begin{tabular}{ P{0.3cm}| L{1.25cm} | R{6cm} P{0.35cm} | R{4.6cm} | c c c c c | }
\multicolumn{1}{L{0.3cm}}{~} & \multicolumn{1}{L{1.25cm}}{~} & \multicolumn{1}{R{6cm}}{~} & \multicolumn{1}{P{0.35cm}}{~} & \multicolumn{1}{R{4.6cm}}{~} & \multicolumn{1}{c}{~} & \multicolumn{1}{c}{~} & \multicolumn{1}{c}{~} & \multicolumn{1}{c}{~} & \multicolumn{1}{c}{~} \\[-14pt]
\cline{5-10}
\multicolumn{1}{c}{} & \multicolumn{1}{c}{} & \multicolumn{1}{c }{} & \multirow{6}{*}[0.1em]{
\RotHeader{\textbf{$\VNEP$ variants}}
}
 & Identifier according to Definition~\ref{def:nomenclature} 
&  $\var VE - $ & $\var E N $ & $\var V R $ & $\var - NR $ & $\var - NL $  \\
\cline{5-10}
\multicolumn{1}{c}{} & \multicolumn{1}{c}{} & \multicolumn{1}{c}{} &  \multicolumn{1}{ c| }{} & Enforcing Node Capacities & \checkmark & $\star$ & \checkmark & $\star$ & $\star$ \\ \cline{5-10}
\multicolumn{1}{c}{} & \multicolumn{1}{c}{} & \multicolumn{1}{c}{} & \multicolumn{1}{ c |}{} & Enforcing Edge Capacities & \checkmark & \checkmark & $\star$ & $\star$ &  $\star$\\ \cline{5-10}
\multicolumn{1}{c}{} & \multicolumn{1}{c}{} & \multicolumn{1}{c}{} & \multicolumn{1}{ c |}{}  & Enforcing Node Placement Restrictions & $\star$ & \checkmark & $\star$ & \checkmark & \checkmark \\ \cline{5-10}
\multicolumn{1}{c}{} & \multicolumn{1}{c}{} & \multicolumn{1}{c}{} & \multicolumn{1}{ c |}{} & Enforcing Edge Routing Restrictions & $\star$&$\star$ & \checkmark & \checkmark & $\star$\\ \cline{5-10}
\multicolumn{1}{c}{} & \multicolumn{1}{c}{} & \multicolumn{1}{c}{} & \multicolumn{1}{ c |}{}   & Enforcing Latency Restrictions & $\star$&$\star$ & $\star$& $\star$& \checkmark \\
\cline{5-10}
\multicolumn{8}{c}{~} \\[-8pt] 
\cline{2-10}

\multirow{6}{*}{\RotHeader{\textbf{Results}}} & 
 Section~\ref{sec:np-completeness-of-the-vnep} &
\multicolumn{3}{ R{10.9cm}| }{$\mathcal{NP}$-completeness and inapproximability under any objective}
& Thm.~\ref{thm:vnep-capacity-nodes-and-edges-is-np-complete} & Thm.~\ref{thm:decision-edge-capacities-node-placement} & Thm.~\ref{thm:node-capacities-routing-restrictions} & Thm.~\ref{thm:without-capacities-node-placement-and-routing-or-latencies} & Thm.~\ref{thm:without-capacities-node-placement-and-routing-or-latencies}\\
\cline{2-10} 

& \multirow{3}{*}{\raisebox{-0.17cm}{Section~\ref{sec:np-completeness-approximate-embeddings}}}  &
\multicolumn{3}{ R{10.9cm} | }{$\NPcompleteness$  and inapproximability when increasing \emph{node capacities} by a factor $\alpha < 2$}  & Thm.~\ref{thm:alpha-approximate-embeddings} & - & Thm.~\ref{thm:alpha-approximate-embeddings} & - & - \\
\cline{3-10}

& & \multicolumn{3}{ R{10.9cm}| }{Inapproximability when increasing edge capacities by a factor $\beta \in \Theta(\log n / \log \log n)$  (unless $\compNP \subseteq \mathcal{BP}\text{-}\textit{TIME}(\bigcup_{d \geq 1} n^{d \log \log n})$)} & \raisebox{-0.17cm}{Thm.~\ref{thm:beta-inapproximability}} & \raisebox{-0.17cm}{Thm.~\ref{thm:beta-inapproximability}} & \raisebox{-0.17cm}{-}& \raisebox{-0.17cm}{-}& \raisebox{-0.17cm}{-} \\
\cline{3-10}

& & \multicolumn{3}{ R{10.9cm}| }{$\NPcompleteness$ and inapproximability when loosening latency bounds by a factor $\gamma < 2$}  & - & - & - & -  &  Thm.~\ref{thm:gamma-approximate-embeddings} \\
\cline{2-10} 

& \multirow{2}{*}{
\hspace{-5pt} Section~\ref{sec:np-completeness-graph-restrictions}} &
\multicolumn{3}{ R{10.9cm}| }{Results are preserved for acyclic substrates (except for Thm.~\ref{thm:beta-inapproximability})}& \multicolumn{5}{c| }{\raisebox{0.05cm}{\rule{1.8cm}{0.4pt}}~Obs.~\ref{obs:vnep-on-dags}~ \raisebox{0.05cm}{\rule{1.8cm}{0.4pt}}}\\
\cline{3-10}
& & \multicolumn{3}{ R{10.9cm}| }{Results are preserved for acyclic, planar, degree-bounded requests} & \multicolumn{5}{c| }{\raisebox{0.05cm}{\rule{1.8cm}{0.4pt}}~Thm.~\ref{thm:np-completeness-on-restricted-request-graphs}~\raisebox{0.05cm}{\rule{1.8cm}{0.4pt}}}	 \\
\cline{2-10}
\end{tabular}
\end{table*}

\subsection{Related Work}
\label{sec:related-work}

\paragraph{Objectives \& Restrictions}

Depending on the setting, many different objectives are considered
for the $\VNEP$. The most studied ones concern minimizing the (resource allocation) cost~\cite{vnep, vnep-survey}, maximizing the profit by exerting admission control~\cite{even2013competitive,rostSchmidFeldmann2014}, and minimizing the maximal load~\cite{mehraghdam2014specifying,chowdhury2012vineyard}.

Besides commonly enforcing that the substrate's physical capacities on servers and edges are not exceeded to provide Quality-of-Service~\cite{vnep-survey}, additional restrictions have emerged:
\begin{itemize}
\item Restrictions on the placement of virtual nodes first arose to enforce closeness to locations of interest~\cite{vnep}, but were also used in the context of privacy policies to restrict mappings to certain countries~\cite{schaffrath2012optimizing}. However, these restrictions are now also used in the context of Service Function Chaining, as specific functions may only be mapped on x86 servers, while firewall appliances cannot~\cite{mehraghdam2014specifying, rfc7665}.
\item Routing restrictions first arose in the context of expressing security policies, as for example some traffic may not be routed via insecure domains or physical links shall not be shared with other virtual networks~\cite{vnep-survey,bays2012security}.
\item Restrictions on latencies were studied for the  $\VNEP$ in~\cite{infuhr2011introducing} and have been recently studied intensely in the context of Service Function Chaining to achieve responsiveness and Quality-of-Service~\cite{mehraghdam2014specifying,rfc7665}.
\end{itemize}

\paragraph{Algorithmic Approaches} 

Several dozens of algorithms were proposed to solve the $\VNEP$ and its siblings, including the Virtual Cluster Embedding~\cite{ballani2011towards} and Service Function Chain Embedding problem~\cite{vnep-survey}. Most approaches to solve the VNEP either rely on heuristics~\cite{vnep} or metaheuristics~\cite{vnep-survey}. On the other hand, several works study exact (non-polynomial time) algorithms to solve the problem to \mbox{(near-)optimality} or to devise heuristics. Mixed Integer Programming is the most widely used exact approach~\cite{mehraghdam2014specifying, rostSchmidFeldmann2014, infuhr2011introducing}.

Only recently, approximation algorithms providing quality guarantees for the VNEP have been presented. In particular, the embedding of chains is approximated under assumptions on the requested resources and the achievable benefit in~\cite{sirocco16path}. In \cite{rostSchmidLeveragingRRIFIPwithPreprint} approximations for cactus request graphs are detailed, while \cite{rostSchmidFPTApproximations} presents fixed-parameter tractable approximations for \emph{arbitrary} request graph topologies.

\paragraph{Complexity Results}

Surprisingly, despite the relevance of the problem and the large body of literature, 
the complexity of the underlying problems has not received much attention. While it can be easily seen that the Virtual Network Embedding Problem encompasses several $\compNP$-hard problems as e.g. the $k$-disjoint paths problem~\cite{chuzhoy2007hardness}, the minimum linear arrangment problem~\cite{mla-survey}, 
or the subgraph isomorphism problem~\cite{eppstein1995subgraph}, most works on the $\VNEP$ cite a $\NPhardness$ result contained in a technical report from 2002 by Andersen~\cite{andersen2002theoretical}. The only other work studying the computational complexity is one by Amaldi et al.~\cite{amaldi2016computational}, which proved the $\NPhardness$ and inapproximability of the profit maximization objective while not taking into account latency or routing restrictions and not considering the hardness of embedding a \emph{single} request. 

\subsection{Contributions and Overview}

In this work, we initiate the systematic 
study of the computational complexity of the $\VNEP$. Taking all the aforementioned restrictions into account, we first compile a concise taxonomy of the $\VNEP$ variants in Section~\ref{sec:formal-model}. Then, we present a powerful reduction framework in Section~\ref{sec:reduction-framework}, which is the base for nearly all hardness results presented in this paper. In particular, we show the following (see also Table~\ref{tab:results-overview-real}):
\begin{itemize}
\item We show the $\NPcompleteness$ of \emph{five} different $\VNEP$ variants in Section~\ref{sec:np-completeness-of-the-vnep}. For example, we consider the variant only enforcing capacity constraints, but also one in which only node placement and latency restrictions must be obeyed \emph{in the absence of capacity constraints}.
\item We extend these results in Section~\ref{sec:np-completeness-approximate-embeddings} and show that the considered variants remain $\NPcomplete$ even when computing \emph{approximate} embeddings, which may exceed latency or capacity constraints by certain factors.
\item Lastly, we show in Section~\ref{sec:np-completeness-graph-restrictions} that the respective $\VNEP$ variants remain $\NPcomplete$ even when restricting substrate graphs to directed acyclic graphs (DAGs) and request graphs to planar, degree-bounded DAGs.
\end{itemize}
As we are proving $\NPcompleteness$ throughout this paper, the implications of our results are severe. Given the $\NPcompleteness$ of finding \emph{any} feasible solution, finding an optimal solution \emph{subject to any objective} is at least $\NPhard$. Furthermore, unless $\compPeqNP$ holds, the respective variants cannot be approximated to within \emph{any} factor. 

Table~\ref{tab:results-overview-real} summarizes our results and is to be read as follows. Any of the five rightmost columns represents a specific $\VNEP$ variant. The $\checkmark$ symbol indicates restrictions that are enforced, while the $\star$ symbol indicates restrictions which are not considered. Importantly, enabling a $\star$ restriction, does not change the results (cf. Lemma~\ref{lem:vnep-only-gets-harder}). Considering a specific variant, the respective column should be read from top to bottom. For example, for $\var VE - $, its $\NPcompleteness$ is shown in Theorem~\ref{thm:vnep-capacity-nodes-and-edges-is-np-complete} while its inapproximability when relaxing edge capacity constraints is shown in Theorem~\ref{thm:beta-inapproximability}. Lastly, all results also hold under the graph restrictions of the two bottom rows.

\section{Formal Model}
\label{sec:formal-model}

Within this section we formalize the $\VNEP$, introduce its variants, and lastly provide an Integer Programming formulation to solve any of the variants.

\paragraph*{Notation} The following notation is used throughout this work. We use $[x]$ to denote $\{1,2,\ldots,x\}$ for $x \in \mathbb{N}$. For a directed graph $G=(V,E)$, we denote by $\delta^+(v) \subseteq E$ and $\delta^-(v)\subseteq E$ the outgoing and incoming edges of node $v \in V$. When considering functions on tuples, we omit the parantheses of the tuple and simply write $f(a,b)$ instead of $f((a,b))$.

\subsection{Basic Problem Definition}
\newcommand{\simplePaths}{\ensuremath{\mathcal{P}_S}}

We refer to the physical network as substrate network and model it as directed graph $\SG=(\SV,\SE)$. Capacities in the substrate are given by the function $\Scap: \SV \cup \SE \to \mathbb{R}_{\geq 0} \cup \{\infty\}$. The capacity $\Scap(u)$ of node $u \in \SV$ may represent for example the number of CPUs while the capacity $\Scap(u,v)$ of edge $(u,v) \in \SE$ represents the available bandwidth. By allowing to set substrate capacities to $\infty$, the capacity constraints on the respective substrate elements can be effectively disabled. We denote by $\simplePaths$ the set of all simple paths in $\SG$.

A request is similarly modeled as directed graph \mbox{$\VG=(\VV,\VE)$} together with node and edge capacities (demands) \mbox{$\Vcap: \VV \cup \VE \to \mathbb{R}_{\geq 0}$}. 

The task is to find a \emph{mapping} of request graph $\VG$ on the substrate network $\SG$, i.e. a mapping of request nodes to substrate nodes and a mapping of request edges to paths in the substrate. Virtual nodes and edges can only be mapped on substrate nodes and edges of sufficient capacity. Accordingly, we denote by $\VVloc = \{u \in \SV | \Scap(u) \geq \Vcap(i)\}$ the set of substrate nodes supporting the mapping of node $i \in \VV$ and by $\VEloc = \{(u,v) \in \SE | \Scap(u,v) \geq \Vcap(i,j) \}$ the  substrate edges supporting the mapping of virtual edge $(i,j) \in \VE$.
\begin{definition}[Valid Mapping]
\label{def:valid-mapping}
A \emph{valid} mapping of request~$\VG$ to the substrate $\SG$ is a tuple~$\map=(\mapV, \mapE)$ of functions that map nodes and edges, respectively, s.t. the following holds:
\begin{itemize}
\item The function $\mapV: \VV \to \SV$ maps virtual nodes to \emph{suitable} substrate nodes, such that $\mapV(i) \in \VVloc$ holds for $i \in \VV$.
\item The function $\mapE: \VE \to \simplePaths$ maps virtual edges $(i,j) \in \VE$ to simple paths in $\SG$ connecting $\mapV(i)$ to $\mapV(j)$, such that $\mapE(i,j) \subseteq \VEloc$ holds for $(i,j) \in \VE$.
\end{itemize} 
\vspace{-10pt}
\end{definition}

Considering the above definition, note the following. Firstly, the mapping $\mapE(i,j)$ of the virtual edge $(i,j) \in \VE$ may be empty, if (and only if) $i$ and $j$ are mapped on the same substrate node.
Secondly, the definition only enforces that single resource allocations do not exceed the available capacity. To enforce that the cumulative allocations respect capacities, we introduce the following:

\begin{definition}[Allocations]
We denote by \mbox{$A_m(x) \in \mathbb{R}_{\geq 0}$} the resource allocations induced by valid mapping \mbox{$\map=(\mapV,\mapE)$} on substrate element $x \in \SG$ and define
\begin{alignat*}{5}
\huge
A_m(u) & = && \textstyle \sum_{ i \in \VV: \mapV(i)=u} \Vcap(i) \\
A_m(u,v) & = &&  \textstyle \sum_{(i,j)\in \VE: (u,v) \in \mapE(i,j)} \Vcap(i,j)
\end{alignat*}
for node $u \in \SV$ and edge $(u,v) \in \SE$, respectively.
\end{definition}

We call a mapping \emph{feasible}, if the (cumulative) allocations do not exceed the capacity of any substrate element:

\begin{definition}[Feasible Embedding]
\label{def:feasible-embedding}
A mapping $\map$ is a feasible embedding, if the allocations do not exceed the capacity, i.e. $A_m(x) \leq \Scap(x)$ holds for $x \in \SG$.
\end{definition}

In this paper we study the \emph{decision} variant of the $\VNEP$, asking whether there exists a feasible embedding:
\begin{definition}[$\VNEP$, Decision Variant]
Given is a single request $\VG$ that shall be embedded on the substrate graph $\SG$. The task is to find any feasible embedding or to decide that no feasible embedding exists.
\end{definition}

\subsection{Variants of the $\VNEP$ \& Nomenclature}
As discussed when reviewing the related work in Section~\ref{sec:related-work}, additional requirements are enforced in many settings. Accordingly, we now formalize (i) node placement, (ii) edge routing, and (iii) latency restrictions. Node placement and edge routing restrictions effectively exclude potential mapping options for nodes and edges. For latency restrictions we introduce latency bounds for each of the virtual edges. 

\begin{definition}[Node Placement Restrictions]
For each virtual node $i \in \VV$ a set of forbidden substrate nodes $\VVlocForbidden \subset \SV$ is provided. Accordingly, the set of allowed nodes $\VVloc$ is defined to be $\{u \in \SV \setminus \VVlocForbidden \,|\,\Scap(u) \geq \Vcap(i) \}$.
\end{definition}

\begin{definition}[Routing Restrictions]
For each virtual edge $(i,j) \in \VE$ a set of forbidden substrate edges $\VElocForbidden \subseteq \SE$ is provided. Accordingly, the set of allowed edges $\VEloc$ is set to be $\{(u,v) \in \SE \setminus \VElocForbidden \,|\,\Scap(u,v) \geq \Vcap(i,j) \}$.
\end{definition}

\begin{definition}[Latency Restrictions]
For each substrate edge $e \in \SE$ the edge's latency is given via $\Slat(e) \in \mathbb{R}_{\geq 0}$. Latency bounds for virtual edges are specified via the function \mbox{$\Vlat: \VE \to \mathbb{R}_{\geq 0} \cup \{\infty\}$}, such that the latency along the substrate path $\mapE(i,j)$, used to realize the edge \mbox{$(i,j) \in \VE$}, is less than $\Vlat(i,j)$. Formally, the definition of feasible embeddings (cf. Definition~\ref{def:feasible-embedding}) is extended by including that $\sum_{e \in \mapE(i,j)} \Slat(e) \leq \Vlat(i,j)$  holds for $(i,j) \in \VE$.
\end{definition}

We introduce the following taxonomy to denote the different problem variants.
\begin{definition}[Taxonomy]
\label{def:nomenclature}
We use the notation $\var C A $ to indicate whether and which of the capacity constraints $\varStyle{C}$ and  which of the additional constraints $\varStyle{A}$ are enforced.
\begin{description}
\item[$\varStyle{C}$] We denote by  $\varStyle{V}$ node capacities, by $\varStyle{E}$ edge capacities, and by $\varStyle{-}$ that none are used. When node or edge capacities are \emph{not} considered, we set the capacities of the respective substrate elements to $\infty$.
\item[$\varStyle{A}$] For the additional restrictions \varStyle{-}, \varStyle{N}, \varStyle{L}, and \varStyle{R} stand for no restrictions, node placement, latency, and routing restrictions, respectively.
\end{description}
\vspace{-12pt}
\end{definition}
Hence, $\var VE - $ indicates the classic $\VNEP$ without additional constraints while obeying capacities and $\var - NL $ indicates the combination of node placement and latency restrictions without considering substrate capacities. We note that the introduction of more restrictions only makes the respective problem harder:
\begin{lemma}
\label{lem:vnep-only-gets-harder}
A VNEP variant $\var A C $ that encompasses all restrictions of $\var A' C' $ is at least as hard as $\var A' C' $.
\end{lemma}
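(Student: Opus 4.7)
The plan is to give a trivial polynomial-time reduction from $\var A' C' $ to $\var A C $ showing that every instance of the weaker variant can be realized as an instance of the stronger variant without changing its set of feasible embeddings. Since feasibility is preserved, a polynomial-time (or approximation) algorithm for $\var A C $ would immediately yield one for $\var A' C' $.

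Concretely, given an input $(\SG,\VG,\Scap,\Vcap,\ldots)$ of $\var A' C' $, I will produce an input of $\var A C $ by leaving all data already present unchanged and filling in ``neutral'' values for exactly those restriction types that $\var A C $ enforces but $\var A' C' $ does not. Each restriction type in our taxonomy admits such a neutral value, which makes the corresponding constraint trivially satisfied for every mapping:
\begin{itemize}
\item Node capacities: set $\Scap(u) = \infty$ for all $u \in \SV$ (this is exactly the convention of Definition~\ref{def:nomenclature}). Then $A_m(u) \leq \Scap(u)$ holds for every mapping.
\item Edge capacities: set $\Scap(u,v) = \infty$ for all $(u,v) \in \SE$, analogously.
\item Node placement restrictions: set $\VVlocForbidden = \emptyset$ for every $i \in \VV$, so that $\VVloc = \{u \in \SV \mid \Scap(u) \geq \Vcap(i)\}$ coincides with the set already implicit in $\var A' C' $.
\item Routing restrictions: set $\VElocForbidden = \emptyset$ for every $(i,j)\in\VE$, analogously.
\item Latency restrictions: set $\Vlat(i,j) = \infty$ for every $(i,j)\in\VE$, so that $\sum_{e \in \mapE(i,j)}\Slat(e) \leq \Vlat(i,j)$ is vacuous.
\end{itemize}

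The construction is clearly polynomial (in fact linear) in the input size, as it only writes constant values into at most $|\SV|+|\SE|+|\VV|+|\VE|$ slots. By Definitions~\ref{def:valid-mapping} and~\ref{def:feasible-embedding} together with the restriction definitions, a mapping $\map=(\mapV,\mapE)$ is a feasible embedding for the constructed $\var A C $ instance if and only if it is a feasible embedding for the original $\var A' C' $ instance: the constraints shared by both variants are identical by construction, and the extra constraints present only in $\var A C $ are trivially satisfied by every valid mapping. Consequently, the two decision problems coincide on this pair of instances, which establishes the claimed reduction.

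There is no real obstacle here: the entire argument is bookkeeping over the taxonomy of Definition~\ref{def:nomenclature}. The only point that warrants care is verifying, case by case, that each restriction type indeed has a neutral value that is admitted by the formal model (in particular that $\Scap$ and $\Vlat$ are allowed to take the value $\infty$, which is already stipulated by the paper).
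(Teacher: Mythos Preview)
Your proof is correct and follows essentially the same approach as the paper: both argue that every restriction type in the taxonomy admits a neutral value, so any instance of the weaker variant becomes an equivalent instance of the stronger one, giving a trivial polynomial-time reduction. The only cosmetic difference is that the paper neutralizes latency restrictions by setting $\Slat(e)=0$ on substrate edges, whereas you set $\Vlat(i,j)=\infty$ on virtual edges; both are admitted by the model and have the same effect.
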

\begin{proof}
The capacity constraints as well as the additional requirements were all formulated in such a fashion that any one of these can be disabled. Considering capacities and latencies, one may set the respective substrate capacities to $\infty$ and the latencies of edges to $0$, respectively. For node placement and edge restrictions one may set the forbidden node and edge sets to the empty set. Hence, there exists a trivial reduction from $\var A C $ to $\var A' C' $ and the result follows.
\end{proof}

\subsection{Relaxing Constraints}

Within this work, we show the $\VNEP$ to be $\NPcomplete$ under many meaningful restriction combinations. This in turn also implies the inapproximability of the respective $\VNEP$ variants (unless $\compPeqNP$). Hence, it is natural to consider a broader class of (approximation) algorithms that may violate constraints by a certain factor: instead of answering the question whether a valid embedding exists that satisfies all capacity constraints, one might for example seek an embedding that uses at most two times the actual capacities. We refer to these embeddings as approximate embeddings:
\begin{definition/}[$\alpha$- / $\beta$- / $\gamma$-Approximate Embeddings]~\\
\label{def:approximate-embeddings}
A mapping $\map$ is an approximate embedding, if it is valid and violates capacity or latency constraints only within a certain bound. Specifically, we call an embedding $\alpha$- and $\beta$-approximate, when node and edge allocations are bounded by $\alpha$ and $\beta$ times the respective node or edge capacity. Considering latency restrictions, we call a mapping $\gamma$-approximate when latencies are within a factor of $\gamma$ of the original bound. Formally, the following must hold for $\alpha, \beta,\gamma \geq 1$:
\begin{alignat*}{7}
A_m(u) & \,\leq \,&& \alpha \cdot \Scap(u) & \forall u \in \SV \\
A_m(u,v) & \leq && \beta \cdot \Scap(u,v) \quad & \forall (u,v) \in \SE \\
\sum_{e \in \mapE(i,j)} \hspace{-10pt} \Slat(e) & \leq  && \gamma \cdot \Vlat(i,j) &  \forall (i,j) \in \VE \tag*{$\square$}
\end{alignat*}

\end{definition/}

\subsection{Integer Programming Formulation}
\label{sec:ip-formulation}

We now give an Integer Programming (IP) formulation, which can be used to solve any of the considered decision $\VNEP$ variants. A similar formulation was proposed in~\cite{infuhr2011introducing}.
Given the  hardness results presented in this paper and given that solving IPs lies in $\compNP$~\cite{papadimitriou1981complexity}, the IP may serve as an attractive approach to solve the respective variants in \emph{exponential} time.
 Besides the practical application, the existence of our formulation (constructively) shows that the $\VNEP$ variants considered here are also all contained in $\compNP$. 
 
 Our formulation naturally encompasses node placement and routing restrictions, while for latencies an additional constraint is introduced.
The decision variable $x \in \{0,1\}$ is used to indicate, whether the request graph $\VG$ is embedded or not. By maximizing $x$, the IP decides whether a feasible embedding exists ($x=1$) or whether no such embedding exists (\mbox{$x=0$}). The mapping of virtual nodes is modeled using decision variables $y^u_i \in \{0,1\}$ for $i \in \VV$ and $u \in \SV$. If $y^u_i = 1$ holds, then the virtual node $i \in \VV$ is mapped on substrate node $u \in \SV$. Constraint~\ref{alg:ip:node-embedding} enforces that each virtual node is mapped to one substrate node, \emph{if} the request is embedded $(x=1)$, while Constraint~\ref{alg:ip:node-forbidding} excludes unsuitable substrate nodes.

For computing edge mappings the decision variables \mbox{$z^{u,v}_{i,j} \in \{0,1\}$} for $(i,j) \in \VE$ and $(u,v) \in \SE$ are employed. If $z^{u,v}_{i,j} = 1$ holds, then the substrate edge $(u,v)$ lies on the path $\mapE(i,j)$.
Constraints~\ref{alg:ip:edge-embedding} and \ref{alg:ip:edge-forbidding} embed virtual links as paths in the substrate, if the request is embedded. In particular, Constraint~\ref{alg:ip:edge-embedding} constructs a unit flow for virtual edge $(i,j) \in \VE$ from the location $u \in \SV$ onto which $i$ was mapped ($y^u_i = 1$)  to the location $v \in \SV$ onto which $j$ was mapped ($y^{v}_j = 1$), while Constraint~\ref{alg:ip:edge-forbidding} excludes unsuitable edges. 
Constraints~\ref{alg:ip:node-capacity} and \ref{alg:ip:edge-capacity} enforce that substrate capacities are obeyed. 
Lastly, Constraint~\ref{alg:ip:latency} is only used when latencies are considered: it enforces that the sum of latencies along the embedding path of a virtual edge is smaller than the respective latency bound.

{

 \newcolumntype{F}{>{$\displaystyle\,}r<{$}@{\hspace{0.0em}}}
 \newcolumntype{C}{>{$\displaystyle\,}c<{$}@{\hspace{0.0em}}}
 \newcolumntype{B}{>{$\displaystyle\,}r<{$}@{\hspace{0.0em}}}
 \newcolumntype{R}{>{$\displaystyle}r<{$}@{\hspace{0.2em}}}
 \newcolumntype{S}{>{$\displaystyle}r<{$}@{\hspace{0.2em}}}
 \newcolumntype{L}{>{$\displaystyle}l<{$}@{\hspace{0.2em}}}
 \newcolumntype{Q}{>{$\displaystyle}l<{$}@{\hspace{0.3em}}}
 \newcommand{\muRow}[1]{\multirow{2}{*}{$\displaystyle #1 $}}
 \newcommand{\muRowThree}[1]{\multirow{3}{*}{$\displaystyle #1 $}}
 \newcommand{\muColL}[1]{\multicolumn{1}{L}{$\displaystyle #1 $}}
 \newcommand{\muColC}[1]{\multicolumn{1}{c}{$\displaystyle #1 $}}
 \newcommand{\muColR}[1]{\multicolumn{1}{R}{#1}}

 \newcounter{IPnumber}
 \setcounter{IPnumber}{0}
 \newcommand{\tagIt}[1]{\refstepcounter{equation}\textnormal{({\theequation})} \label{#1}}
  \newcommand{\tagItLat}[1]{\refstepcounter{equation}\textnormal{$\hspace{0pt}^\star\textnormal{(\theequation)}$} \label{#1}}

{
 \LinesNotNumbered
 \renewcommand{\arraystretch}{1.25}

 \begin{IPFormulation}{t!}
 \popline
 \popline

 \SetAlgorithmName{Integer Program}{}{{}}

 \scalebox{0.96}{

 \newcommand{\spaceIt}{\qquad\quad\quad}
 \newcommand{\miniSpace}{\hspace{1.5pt}}
 \begin{tabular}{FRLQB}
\multicolumn{4}{r}{\parbox{0.95\columnwidth}{~}} \\[-18pt]

\multicolumn{4}{C}{ \phantomsection \textnormal{max~}  x ~~~~~~~~~~~~~~}  & \tagIt{alg:ip:obj} \\

\phantomsection \sum_{u \in \SV } y^u_{i} & = & x & \forall i \in \VV &  \tagIt{alg:ip:node-embedding} \\

\phantomsection  \sum_{u \in \SV \setminus \VVloc} \hspace{-6pt} y^u_{i} & = & 0 & \forall i \in \VV &  \tagIt{alg:ip:node-forbidding} \\

\phantomsection  \sum_{(u,v) \in \delta^+(u)} \hspace{-12pt} z^{u,v}_{i,j} - \hspace{-9pt} \sum_{(v,u) \in \delta^-(u)} \hspace{-12pt} z^{v,u}_{i,j} & = & y^u_{i} - y^u_{j} ~~& \forall (i,j) \in \VE, u \in \SE &  \tagIt{alg:ip:edge-embedding} \\

\phantomsection  \sum_{(u,v) \in \SE \setminus \VEloc} \hspace{-8pt} z^{u,v}_{i,j} & = & 0 & \forall (i,j) \in \VE &  \tagIt{alg:ip:edge-forbidding} \\

\phantomsection  \sum_{i \in \VV} \Vcap(i) \cdot y^u_{i} & \leq & \Scap(u) & \forall u \in \SV & \tagIt{alg:ip:node-capacity} \\

\phantomsection  \sum_{(i,j) \in \VE} \Vcap(i,j) \cdot z^{u,v}_{i,j} & \leq & \Scap(u,v) & \forall (u,v) \in \SE & \tagIt{alg:ip:edge-capacity} \\

\phantomsection  \sum_{(u,v) \in \SE} \Slat(u,v) \cdot z^{u,v}_{i,j} & \leq & \Vlat(i,j) & \forall (i,j) \in \VE & \tagItLat{alg:ip:latency} \\
	 	
 \end{tabular}
 }
 \caption{$\VNEP$ Decision Variant}
 \label{alg:ip:formulation}

 \end{IPFormulation}

 }

}

\section{Reduction Framework}
\label{sec:reduction-framework}

{

\newcommand{\formula}{\ensuremath{\phi}}

\newcommand{\Lit}{\ensuremath{\mathcal{L_{\formula}}}}
\newcommand{\LitI}[1][i]{\ensuremath{\mathcal{L}_{#1}}}
\newcommand{\LitIpre}[1][i]{\ensuremath{\mathcal{L}_{#1,\mathrm{pre}}}}
\newcommand{\LitIJ}[1][i,j]{\ensuremath{\mathcal{L}_{#1}}}

\newcommand{\Cla}{\ensuremath{\mathcal{C}_{\formula}}}

\newcommand{\lit}[1][k]{\ensuremath{x_{#1}}}
\newcommand{\cla}[1][i]{\ensuremath{\mathcal{C}_{#1}}}

\newcommand{\FG}{\ensuremath{G_{\formula}}}
\newcommand{\FV}{\ensuremath{V_{\formula}}}
\newcommand{\FE}{\ensuremath{E_{\formula}}}

\renewcommand{\req}{\ensuremath{r({\formula})}}

\renewcommand{\SG}{\ensuremath{G_{S(\formula)}}}
\renewcommand{\SV}{\ensuremath{V_{S(\formula)}}}
\renewcommand{\SE}{\ensuremath{E_{S(\formula)}}}

\newcommand{\AssI}[1][i]{\ensuremath{\mathcal{A}_{#1}}}
\newcommand{\ass}[1][i,m]{\ensuremath{a_{#1}}}

\newcommand{\assG}{\ensuremath{\alpha}}

\newcommand{\True}{\ensuremath{\mathrm{T}}}
\newcommand{\False}{\ensuremath{\mathrm{F}}}

This section presents the main insight and contribution of our paper, namely a generic reduction framework that allows to derive hardness results by slightly tailoring the proof for the individual problem variants. 
Our reduction framework relies on $\ThreeSAT$ and we first introduce some notation. Afterwards we continue by constructing a (partial) $\VNEP$ instance, whose solution will indicate whether the $\ThreeSAT$ formula is satisfiable. 

\subsection{$\ThreeSAT$: Notation and Problem Statement}
We denote by $\Lit = \{\lit\}_{k \in [N]}$ a set of $N\in \mathbb{N}$ literals and by $\Cla = \{\cla\}_{i \in [M]}$ a set of $M\in \mathbb{N}$ clauses, in which literals may occur either positively or negated.
The formula $\phi = \bigwedge_{\cla \in \Cla} \cla$ is a $\ThreeSAT$ formula, iff. each clause $\cla$ is the disjunction of at most 3 literals of $\Lit$. Denoting the truth values by $\False$ and $\True$, $\ThreeSAT$ asks to determine whether an assignment $\alpha : \Lit \to \{\False, \True\}$ exists, such that $\phi$ is satisfied. $\ThreeSAT$ is one of Karp's 21 $\NPcomplete$ problems:
\begin{theorem}[Karp~\cite{karp1972reducibility}]
Deciding $\ThreeSAT$ is $\NPcomplete$.
\end{theorem}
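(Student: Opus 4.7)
The plan is to establish the two standard halves of $\compNP$-completeness separately: containment in $\compNP$ and $\compNP$-hardness. Containment is immediate: given a $\ThreeSAT$ formula $\formula$ with $N$ literals and $M$ clauses, a truth assignment $\assG : \Lit \to \{\False,\True\}$ is a certificate of size $O(N)$, and verifying that each of the $M$ clauses is satisfied under $\assG$ takes $O(M)$ time. Hence $\ThreeSAT \in \compNP$.

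The hard direction is $\compNP$-hardness. I would proceed in two steps. First, I would invoke the Cook--Levin theorem to conclude that general $\textsc{SAT}$ (boolean satisfiability with clauses of arbitrary width, expressed in conjunctive normal form) is $\compNP$-hard. Cook--Levin is itself proved by taking an arbitrary language $L \in \compNP$, decided by a nondeterministic Turing machine $M$ in time $p(n)$ for some polynomial $p$, and building a CNF formula whose variables encode the tape contents, head position, and state at each of the $p(n)$ time steps of a computation on input $w$, with clauses enforcing (i) a valid initial configuration, (ii) consistent transitions according to $M$'s transition relation, and (iii) an accepting final state. The size of the formula is polynomial in $|w|$, and it is satisfiable iff $w \in L$. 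I would treat this as a black-box import rather than reprove it in detail.

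Second, I would reduce $\textsc{SAT}$ to $\ThreeSAT$ clause-by-clause. Given a CNF formula $\formula'$, replace each clause $C = (\ell_1 \vee \ell_2 \vee \cdots \vee \ell_k)$ by an equisatisfiable conjunction of $3$-clauses using fresh auxiliary variables $y_1, \ldots, y_{k-3}$:
\begin{itemize}
\item if $k \leq 3$, pad with duplicate literals to width exactly $3$;
\item if $k \geq 4$, output the chain
\[
(\ell_1 \vee \ell_2 \vee y_1) \wedge (\neg y_1 \vee \ell_3 \vee y_2) \wedge \cdots \wedge (\neg y_{k-3} \vee \ell_{k-1} \vee \ell_k).
\]
\end{itemize}
A routine induction on $k$ shows that this conjunction is satisfiable under an extension of $\assG$ iff some $\ell_t$ is true under $\assG$, so the resulting $\ThreeSAT$ formula $\formula$ is satisfiable iff $\formula'$ is. The transformation is clearly polynomial in $|\formula'|$. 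Composing the two reductions gives a polynomial-time reduction from any $L \in \compNP$ to $\ThreeSAT$, completing $\compNP$-hardness.

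The main obstacle is really just Cook--Levin: encoding a nondeterministic Turing machine's accepting computation as a CNF formula requires careful bookkeeping of the configuration variables and the locality of transitions, and is where essentially all of the work sits. The clause-width reduction from $\textsc{SAT}$ to $\ThreeSAT$ is straightforward once Cook--Levin is in hand, and since this paper merely cites Karp's original $\compNP$-completeness list, I would defer the Cook--Levin details to the standard references.
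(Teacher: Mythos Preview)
Your argument is the standard textbook proof and is correct. However, the paper does not actually prove this statement at all: it is stated as a cited result (attributed to Karp~\cite{karp1972reducibility}) and used as a black box for the subsequent reductions. So there is no ``paper's own proof'' to compare against; you have supplied a proof where the paper deliberately gives none.
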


For reducing $\ThreeSAT$ to $\VNEP$, it is important that the clauses be ordered and we define the following:
\begin{definition/}[First Occurence of Literals]
We denote by $\mathcal{C}: \Lit \to [M]$ the function yielding the index of the clause in which a literal first occurs. Hence, if $\mathcal{C}(\lit) = i$, then $\lit$ is contained in $\cla[i]$ while not contained in $\cla[i']$ for $i' \in [i-1]$.\hspace*{\fill} $\square$
\end{definition/}

As we are interested the satisfiability of a $\ThreeSAT$ formula $\formula$, we define the set of satisfying assignments \emph{per clause}:
\begin{definition}[Satisfying Assignments]
We denote by \mbox{$\AssI = \{\ass : \LitI \to \{\False,\True\}  ~|~\ass \textnormal{ satisfies } \cla \}$} the set of all possible assignments of truth values to the literals $\LitI$ of $\cla$ satisfying $\cla$. Note that all elements of $\AssI$ are functions.
\end{definition}
\begin{figure}[tb!]
\centering
\includegraphics[width=1\columnwidth]{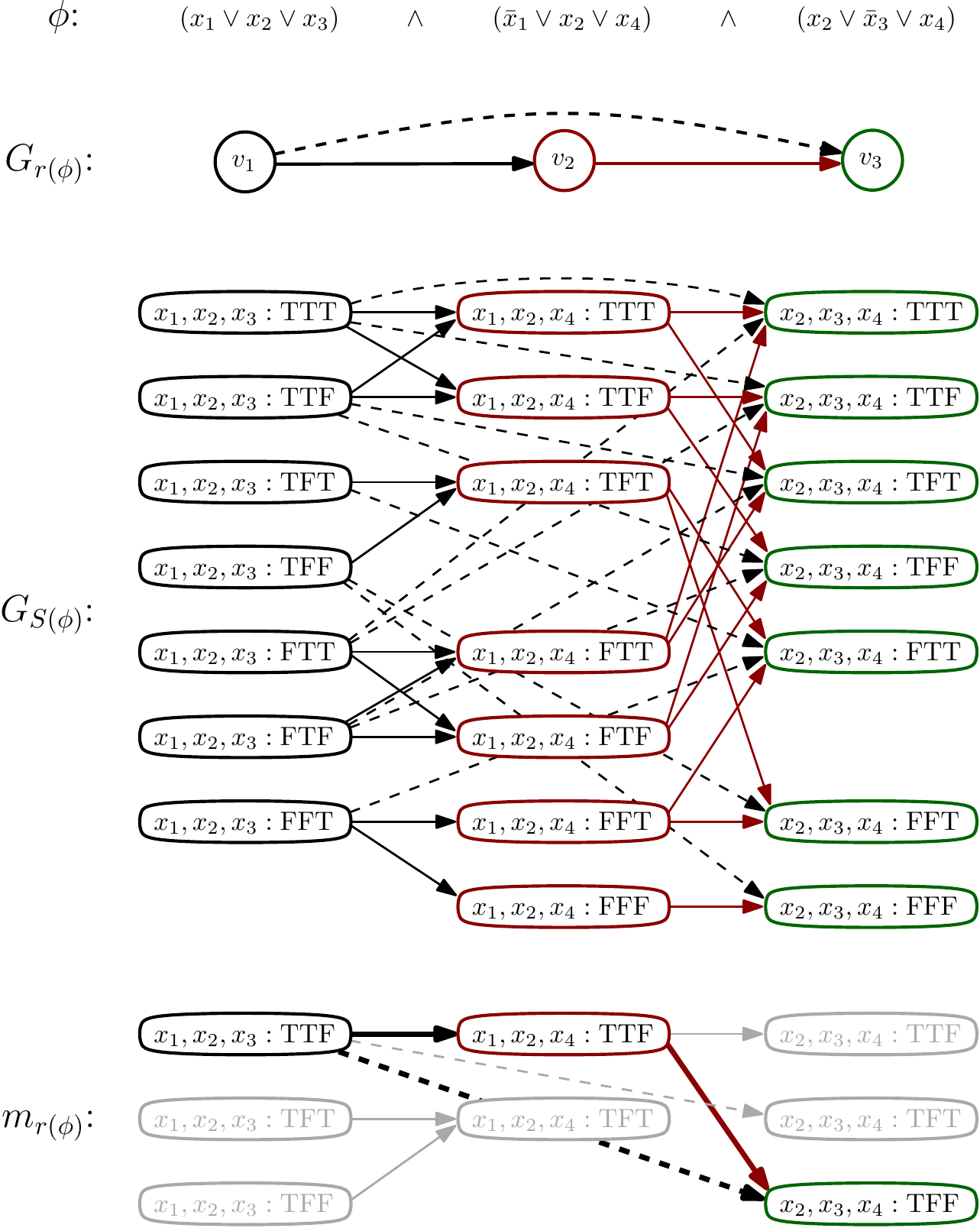}
\caption{Visualization of the construction of substrate and request graphs for the $\ThreeSAT$ formula $\formula$ (cf.~Definitions~\ref{def:substrate-construction} and~\ref{def:request-construction}). Additionally, a mapping $\map$ satisfying the conditions of Lemma~\ref{lem:base-lemma} is shown. Accordingly, the formula $\phi$ is satisfied. Concretely, the mapping represents the assignment of truth values $x_1= \mathrm{T}$, $x_2 = \mathrm{T}$, $x_3 = \mathrm{F}$, $x_4 = \mathrm{F}$.}
\label{fig:3-sat-illustration-new}
\end{figure}

Lastly, to abbreviate notation, we employ $\LitIJ = \LitI \cap \LitI[j]$ to denote the intersection of the literal sets of $\cla$ and $\cla[j]$.

\subsection{General $\VNEP$ Instance Construction}
\label{sec:vnep-instance-construction}

For a given $\ThreeSAT$ formula $\formula$, we now construct a $\VNEP$ instance consisting of a substrate graph $\SG$ and a request graph $\VG$. The question whether the formula $\formula$ is satisfiable will eventually reduce to the question whether a feasible embedding of $\VG$ on $\SG$ exists. Figure~\ref{fig:3-sat-illustration-new} illustrates the construction described in the following.

\begin{definition}[Substrate Graph $\SG$]
\label{def:substrate-construction}
For a given $\ThreeSAT$ formula $\formula$ we define the substrate graph $\SG=(\SG,\SE)$ as follows.
For each clause $\cla \in \Cla$ and each potential assignment of truth values satisfying $\cla$, a substrate node is constructed, i.e. we set $\SV = \bigcup_{\cla \in \Cla} \AssI$. We connect two  substrate nodes $a_{i,m} \in \SV$ and $a_{j,n} \in \SV$, iff. a literal $\lit$ is introduced in the clause $\cla$ for the first time and is also used in clause $\cla[j]$, \emph{and} $a_{i,m}$ and $a_{j,n}$ agree on the truth values of the literals contained in both clauses. Formally, we set:
\[
\SE = \left\{ (a_{i,m}, a_{j,n}) \,\middle| \begin{array}{l}
 \exists \lit \in \LitIJ \textnormal{ with } \mathcal{C}(\lit) = i  \textnormal{ and } \\
 a_{i,m}(x_l) = a_{j,n}(x_l) \textnormal{ for } x_l \in \LitIJ
\end{array} \right\}
\]
Capacities etc. are introduced in the respective reductions.
\end{definition}

\begin{definition}[Request Graph $\req$]
\label{def:request-construction}
For a given $\ThreeSAT$ formula $\formula$ we define the request graph $\VG = (\VV,\VE)$ as follows.
For each clause $\cla \in \Cla$ a node $v_i$ is introduced, i.e. $\VV = \{v_i \,|\, \cla \in \Cla\}$. Matching the construction of the substrate graph $\SG$, we introduce directed edges $(v_i,v_j) \in \VE$ only if there exists a literal $\lit \in \cla$ being introduced in $\cla$ and being also used in the clause $\cla[j]$:
\[
\VE = \{(v_i, v_j) \,|\, \exists \lit \in \LitIJ \textnormal{ with } \mathcal{C}(\lit) = i\}
\]
Demands etc. are introduced in the respective reductions.
\end{definition}

\subsection{The Base Lemma}
\label{sec:base-lemma}

Nearly all of our results are based on the following lemma.

\begin{lemma}
\label{lem:base-lemma}
The $\ThreeSAT$ formula $\formula$ is satisfiable
\begin{center}
 if and only if
\end{center}
 
\noindent there exists a valid mapping $\map$ of $\VG$ on $\SG$, such that 
\begin{enumerate}
\item each virtual node $v_i$ is mapped on a substrate node corresponding to assignments $\AssI$ of the $i$-th clause, i.e. $\mapV(v_i) \in \AssI$ holds for all $v_i \in \VV$, and
\item virtual edges are embedded using a single substrate edge, i.e. $|\mapE(v_i,v_j)| = 1$ holds for all $(v_i,v_j) \in \VE$.
\end{enumerate}

\end{lemma}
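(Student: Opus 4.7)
My plan is to establish the equivalence by proving each direction directly from Definitions~\ref{def:substrate-construction} and~\ref{def:request-construction}; these have been engineered so that satisfiability of $\formula$ and validity of the mapping translate into each other by a short unpacking.

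For the forward direction, I would take a satisfying assignment $\assG \colon \Lit \to \{\False,\True\}$ and set $\mapV(v_i) := \assG|_{\LitI}$ for every $v_i \in \VV$. Since $\assG$ satisfies $\cla$, this restriction lies in $\AssI$, immediately giving condition~(1) of the lemma. To obtain condition~(2), for each $(v_i,v_j) \in \VE$ I would exhibit the single substrate edge $(\mapV(v_i),\mapV(v_j))$: by the definition of $\VE$ there is some literal $\lit \in \LitIJ$ with $\mathcal{C}(\lit)=i$, and $\mapV(v_i)$ and $\mapV(v_j)$ both agree with $\assG$ on every literal of $\LitIJ$; this is precisely the edge predicate of Definition~\ref{def:substrate-construction}, so setting $\mapE(v_i,v_j)$ to that single edge yields a valid mapping.

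For the backward direction, given a mapping $\map$ satisfying~(1) and~(2), I would assemble a global assignment by $\assG(\lit) := \mapV(v_{\mathcal{C}(\lit)})(\lit)$, i.e.\ every literal is assigned according to the clause in which it first occurs. To prove $\assG$ satisfies $\formula$, it suffices to show that for each clause $\cla$ the restriction $\assG|_{\LitI}$ coincides with $\mapV(v_i) \in \AssI$, since the latter already satisfies $\cla$ by~(1). For $\lit \in \LitI$ with $\mathcal{C}(\lit)=i$ the equality is definitional; otherwise $\mathcal{C}(\lit) < i$, and the construction of $\VE$ supplies the edge $(v_{\mathcal{C}(\lit)}, v_i) \in \VE$, which by~(2) is realized by a single edge of $\SE$. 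The existence of that substrate edge forces $\mapV(v_{\mathcal{C}(\lit)})$ and $\mapV(v_i)$ to agree on every literal in $\LitI \cap \LitI[\mathcal{C}(\lit)]$, and in particular on $\lit$, closing the argument.

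The step I expect to require the most care is exactly this consistency argument in the backward direction: the per-clause assignments $\mapV(v_i)$ are chosen independently, and gluing them into one $\assG$ demands that any two agree on every shared literal. The request graph (with an edge from each literal's first-occurrence clause to every later clause in which it reappears) and the substrate's edge predicate (agreement on shared literals) are tailored precisely to enforce this; once the role of $\mathcal{C}(\cdot)$ is made explicit, the rest of the verification is essentially mechanical.
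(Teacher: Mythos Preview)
Your proposal is correct and follows essentially the same argument as the paper: both directions hinge on the fact that the substrate edge predicate forces agreement on shared literals, and in the backward direction you exploit the edge $(v_{\mathcal{C}(\lit)},v_i)$ exactly as the paper does. Your backward direction is slightly more streamlined---you define $\assG$ directly via $\mathcal{C}(\cdot)$ rather than building it by iterative extension over the clause order---but the underlying mechanism is identical.
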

\begin{proof}

We first show that if $\formula$ is satisfiable, then such a mapping $\map$ must exist. Afterwards, we show that if such a mapping $\map$ exists, then $\formula$ must be satisfiable.

Assume that $\formula$ is satisfiable and let $\assG: \Lit \to \{  \mathrm{F}, \mathrm{T} \}$ denote an assignment of truth values, such that $\assG$ satisfies $\formula$. We construct a mapping $\map = (\mapV, \mapE)$ for request $\req$ as follows.
The virtual node $v_i \in \VV$ corresponding to clause $\cla$ is mapped onto the substrate node $\ass[i,m] \in \AssI \subseteq \SV$, iff. $\ass[i,m]$ agrees with $\assG$ on the assignment of truth values to the contained literals, i.e. $\ass[i,m](x_k) = \alpha(x_k)$ for $x_k \in \cla$. As $\assG$ satisfies $\formula$, it satisfies each clause and hence $\mapV(v_i) \in \SV$ holds for all $\cla \in \Cla$.
The virtual edge $(v_i,v_j) \in \VE$ is mapped via the direct edge between $\mapV(v_i)$ and $\mapV(v_j)$. This edge $ (\mapV(v_i), \mapV(v_j))$ must exist in $\SE$, as the existence of virtual edge $(v_i,v_j)$ implies that clause $\cla$ is the first clause introducing a literal of $\LitIJ$ and  $\mapV(v_i) = \ass[i,m]$ and $\mapV(v_j) = \ass[j,n]$ must agree by construction on the assignment of truth values for all literals.
Clearly, the constructed mapping $\map$ fulfills both the conditions stated in the lemma, hence completing the first half of the proof.

We now show that if there exists a mapping $\map$ meeting the two requirements stated in the lemma, then the formula $\formula$ is indeed satisfiable. We constructively recover an assignment of truth values $\assG : \Lit \to \{ \mathrm{F}, \mathrm{T} \}$ from the mapping $\map$ by iteratively extending the initially empty assignment.
Concretely, we iterate over the mappings of the virtual nodes corresponding to clauses $\Cla$ one by one (according to the precedence relation of the indices). By our assumption on the node mapping, $\mapV(v_i) \in \AssI$ holds. Accordingly, as the substrate node $\mapV(v_i)$ represents an assignment of truth values to the literals of clause $\cla$, we extend $\assG$ by setting $\assG(\lit) \triangleq \big[\mapV(v_i)\big](\lit)$ for all literals $\lit$ contained in $\cla$.

We first show that this extension is always valid in the sense that previously assigned truth values are never changed. To this end, assume that the clauses $\cla[1], \cla[2], \dots, \cla[i-1]$ were handled without any such violations. Hence the literals $\bigcup_{j < i} \LitI[j]$ have been assigned truth values in the first $i-1$ iterations not contradicting previous assignments. 
When extending $\assG$ by the mapping of $\mapV(v_i)$ in the $i$-th iteration, there are two cases to consider. First, if none of the literals $\LitI$ were previously assigned a truth value, i.e. $\LitI \cap \bigcup_{j < i} \LitI[j] = \emptyset$ holds, then the extension of $\assG$ as described above cannot lead to a contradiction. Otherwise, if $\LitIpre = \LitI \cap \bigcup_{j < i} \LitI[j] \neq \emptyset$ holds, we show that extending $\assG$ by $\mapV(v_i) = \ass$ does not change the truth value of any literal $x_k$ contained in $\LitIpre$. 

For the sake of contradiction, assume that $\lit \in \cla$ is a literal, for which $\assG(x_k)$ does not equal $\big[\mapV(v_i)\big](x_k)$. As $\lit$ was previously assigned a value, there must exist a clause $\cla[j]$ in which $\lit$ was first used, such that $j<i$ holds. Let $\mapV(v_i)=\ass[i,m] \in \AssI[i]$ and $\mapV(v_j)=\ass[j,n] \in \AssI[j]$. As the edge $(v_j,v_i)$ is contained in $\VE$ by definition and all edges are mapped using a single substrate edge by our assumptions, $\mapE(v_i,v_j) =  \langle (\ass[j,n], \ass[i,m]) \rangle$ must hold. Hence, as $(\ass[j,n], \ass[i,m]) \in \SE$ must hold and edges are only introduced \emph{if} assignments agree with each other, we have $\big[\mapV(v_j)\big](x_k) = \ass[j,n](x_k) = \ass[i,m](x_k) = \big[\mapV(v_i)\big](x_k)$. This contradicts our assumption that $\assG(x_k) \neq \big[\mapV(v_i)\big](x_k)$ holds. Hence, the extension of $\assG$ is always valid.

By construction of the substrate graph $\SG$, the node set $\AssI \subseteq \SV$ contains only the assignments of truth values for the literals $\LitI$ of clause $\cla \in \Cla$ that satisfy the respective clause. Hence, $\assG$ satisfies all of the clauses and hence satisfies $\formula$, completing the proof of the base lemma.
\end{proof}

The base lemma is the heart of our reduction framework for obtaining our results and we note that the construction of the substrate and the request graph is polynomial in the size of the $\ThreeSAT$ formula. Indeed, the base lemma forms the basis for polynomial-time reductions for the different $\VNEP$ decision variants. Concretely, consider some $\VNEP$ variant $\var X Y $. If this variant is `expressive' enough such that any feasible embedding must meet the criteria of Lemma~\ref{lem:base-lemma}, then $\var X Y $ is -- by reduction from $\ThreeSAT$ -- $\NPhard$. Furthermore, the Integer Program presented in Section~\ref{sec:ip-formulation} shows that all the $\VNEP$ variants considered here lie in $\compNP$ and hence the successful application of the base lemma shows the $\NPcompleteness$ of the respective variant.
As a result, for the considered $\VNEP$ variant, any optimization problem (e.g. cost) cannot be approximated within any factor. The following lemma formalizes this observation:
\begin{lemma}
\label{lem:theorem-factory}
If there is a polynomial-time reduction from $\ThreeSAT$ to the $\VNEP$ decision problem under constraints $\var X Y $, then the respective $\VNEP$ variant is $\NPcomplete$. Furthermore, any optimization problem over the same set of constraints is (i) $\NPhard$ and (ii) inapproximable (within any factor), unless $\compPeqNP$ holds.
\end{lemma}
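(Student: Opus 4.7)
The plan is to decompose the statement into three sub-claims and dispatch each using machinery that is already set up in the paper.

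First, I would argue $\NPcompleteness$ of the decision variant $\var X Y $. The $\NPhardness$ part is immediate from the hypothesis: $\ThreeSAT$ is $\NPcomplete$ (Karp), and the assumed polynomial-time reduction composes with this to yield $\NPhardness$ of $\var X Y $. For membership in $\compNP$, I would invoke the Integer Program of Section~\ref{sec:ip-formulation}: a feasible assignment to its binary variables $x$, $y^u_i$, $z^{u,v}_{i,j}$ is a polynomial-size certificate, and all constraints (flow conservation, placement/routing exclusions, capacity bounds, and, if applicable, latency bounds) can be checked in polynomial time. Equivalently, one can directly guess a mapping $(\mapV,\mapE)$ and verify Definition~\ref{def:valid-mapping} together with Definition~\ref{def:feasible-embedding} (and the latency extension if enforced). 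Combining these two observations yields $\NPcompleteness$.

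Second, for claim (i), I would reduce the decision variant to any optimization problem over the same constraint set. If a polynomial-time algorithm $\mathcal{A}$ solved the optimization, then given an instance of $\var X Y $ one could run $\mathcal{A}$ on it under any fixed objective (e.g.\ a constant objective, so that any feasible mapping is optimal); $\mathcal{A}$ must return a feasible mapping iff one exists, deciding feasibility in polynomial time. Since the decision variant is $\NPhard$ by claim one, this would imply $\compPeqNP$.

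Third, for claim (ii), I would use the standard gap argument. Suppose, for contradiction, that a polynomial-time $\rho$-approximation algorithm $\mathcal{A}_\rho$ existed for some $\rho \geq 1$ under some objective. By the standard convention on approximation algorithms, $\mathcal{A}_\rho$ must return a feasible mapping whenever the feasible region is non-empty, and can only fail to do so when the instance is infeasible. Hence running $\mathcal{A}_\rho$ on a decision-variant instance distinguishes the feasible from the infeasible case in polynomial time, again contradicting $\NPhardness$ of the decision problem unless $\compPeqNP$. The argument is agnostic to whether the objective is of minimization or maximization type, since only feasibility--not the returned objective value--is consulted.

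No substantive obstacle is expected: the lemma is essentially a meta-observation that packages the conclusion of applying Lemma~\ref{lem:base-lemma} (plus the IP-based $\compNP$ certificate) into a clean template for the concrete hardness theorems that follow. The only point requiring a hint of care is to make explicit that approximation algorithms are required to output feasible solutions whenever they exist, which is precisely what lets feasibility be extracted from any finite approximation ratio.
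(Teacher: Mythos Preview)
Your proposal is correct and matches the paper's reasoning exactly: the paper does not even give a formal proof of this lemma but treats it as an observation, noting in the preceding paragraph that hardness comes from the $\ThreeSAT$ reduction, $\compNP$ membership from the Integer Program of Section~\ref{sec:ip-formulation}, and inapproximability from the fact that producing any feasible solution already decides the $\NPhard$ feasibility question. Your write-up simply makes these three steps explicit.
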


Lastly, the following lemma will prove useful when applying the base lemma.

\begin{lemma}
\label{lem:request-normalization}
Exactly one of the following two 
following properties holds for formula~$\varphi$:
\begin{enumerate}
\item The clauses of $\formula$ can be ordered such that within the corresponding request graph $\VG$ only the node $v_1 \in \VV$ has no incoming edges.
\item $\varphi$ can be decomposed into formulas $\varphi_1$ and $\varphi_2$, such that the sets of literals occurring in $\varphi_1$ and $\varphi_2$ are disjoint, while $\varphi = \varphi_1 \wedge \varphi_2$ holds. Hence, $\varphi$ is satisfiable iff. $\varphi_1$ and $\varphi_2$ are (independently) satisfiable.
\end{enumerate}
\end{lemma}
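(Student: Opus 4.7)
The plan is to prove the dichotomy via an auxiliary \emph{literal-sharing graph} $H$ of $\varphi$, defined as the undirected graph on vertex set $\Cla$ with an edge $\{\cla[i], \cla[j]\}$ exactly when $\LitI[i] \cap \LitI[j] \neq \emptyset$. I would split the argument on whether $H$ is connected.

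\emph{Disconnected case.} If $H$ has at least two connected components, pick any component $K \subsetneq \Cla$ and set $\varphi_1 = \bigwedge_{\cla[i] \in K} \cla[i]$ and $\varphi_2 = \bigwedge_{\cla[i] \notin K} \cla[i]$, so that $\varphi = \varphi_1 \wedge \varphi_2$. If some literal $\lit$ appeared in both a clause of $\varphi_1$ and a clause of $\varphi_2$, those two clauses would be joined by an edge in $H$, contradicting the component structure. Hence the literal sets of $\varphi_1$ and $\varphi_2$ are disjoint, and satisfiability of $\varphi$ reduces to independent satisfiability of $\varphi_1$ and $\varphi_2$, giving property~(2).

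\emph{Connected case.} If $H$ is connected, fix an arbitrary root clause and run a BFS (or any spanning-tree traversal) in $H$, relabelling the clauses $\cla[1], \cla[2], \ldots, \cla[M]$ in the order they are discovered. For every $i > 1$, the traversal parent of $\cla[i]$ is some $\cla[p]$ with $p < i$, and the $H$-edge $\{\cla[p], \cla[i]\}$ witnesses a literal $\lit \in \LitI[p] \cap \LitI[i]$. Setting $j = \mathcal{C}(\lit) \leq p < i$, Definition~\ref{def:request-construction} then places the arc $(v_j, v_i)$ into $\VE$, so $v_i$ has an incoming edge. Conversely, any arc $(v_j, v_1)$ with $j \neq 1$ would require a literal $\lit \in \LitI[j] \cap \LitI[1]$ with $\mathcal{C}(\lit) = j$; but $\lit \in \LitI[1]$ already forces $\mathcal{C}(\lit) \leq 1$, contradicting $j \neq 1$. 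Thus $v_1$ is the only node without an incoming edge, establishing property~(1).

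The two properties are mutually exclusive, since property~(2) means no clause of $\varphi_1$ shares a literal with any clause of $\varphi_2$, forcing $H$ to be disconnected; then no ordering can give every non-initial $v_i$ an incoming edge, because the first clause from the ``other side'' cannot do so. The only delicate bookkeeping concerns the first-occurrence function $\mathcal{C}$: the witness literal shared by $\cla[p]$ and $\cla[i]$ must have its first occurrence at an index strictly less than $i$, which is exactly what $\mathcal{C}(\lit) \leq p < i$ delivers. I therefore expect no substantive obstacle beyond the correct setup of $H$ and the traversal order.
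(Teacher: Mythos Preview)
Your proof is correct and follows essentially the same approach as the paper's: the paper builds the ordering greedily by repeatedly picking any clause that shares a literal with the already-indexed ones, which is exactly a traversal of your literal-sharing graph $H$, and its failure case coincides with your disconnected case. Your version is in fact slightly more complete, since you explicitly verify that $v_1$ has no incoming edge and argue mutual exclusivity of the two properties, points the paper leaves implicit.
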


\begin{proof}
We prove the statement by a greedy construction and assume that the clauses are initially unordered. 
We iteratively assign an index to the clauses, keeping track of which clauses were not assigned an index yet.
Initially, pick any of the clauses and assign it the index $1$. Now, iteratively choose any clause which contains a literal that already occurs in the set of indexed clauses. If no such clause exists, then the clauses already indexed and the clauses not indexed obviously represent a partition of the literal set and hence the second statement holds true. 
However, if the greedy step succeeded every time, then the following holds with respect to the constructed ordering: any virtual node $v_i$ corresponding to clause $\cla$, for $i > 1$, must have an incoming edge by Definition~\ref{def:request-construction} as the clause overlapped with the already introduced literals. 
\end{proof}

\section{$\NPcompleteness$ of the $\VNEP$}

\label{sec:np-completeness-of-the-vnep}

We employ our framework outlined in the previous section to derive
 a series of hardness results for the $\VNEP$. In particular, we first show the $\NPcompleteness$ of the original $\VNEP$ variant $\var VE - $ in the absence of additional restrictions. Given this result, we investigate several other problem settings and show, among others, that also deciding $\var - LN $ is $\NPcomplete$. Hence, even when the physical network 
 does not impose any resource constraints (i.e., nodes and links have infinite
 capacities), finding an embedding satisfying latency and node placement restrictions is $\NPcomplete$. Again, it must be noted that adding further restrictions only renders the $\VNEP$ harder (cf. Lemma~\ref{lem:vnep-only-gets-harder}).

\subsection{$\NPcompleteness$ under Capacity Constraints}

We first consider the most basic $\VNEP$ variant $\var VE - $.

\begin{theorem}
\label{thm:vnep-capacity-nodes-and-edges-is-np-complete}
$\VNEP$ $\var VE - $ is $\NPcomplete$ and cannot be approximated under any objective (unless $\compPeqNP$). 
\end{theorem}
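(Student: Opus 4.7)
The plan is to reduce from $\ThreeSAT$ via the framework of Section~\ref{sec:reduction-framework}. Given a formula $\formula$, I first apply Lemma~\ref{lem:request-normalization} (recursing on the independent subformulas when case~2 applies) and then build the substrate $\SG$ and request graph $\VG$ exactly as in Definitions~\ref{def:substrate-construction} and~\ref{def:request-construction}. Because only capacity constraints are available in variant $\var VE - $, I assign unit demand to every virtual node and virtual edge, and unit capacity to every substrate node and substrate edge.

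The forward direction is immediate from Lemma~\ref{lem:base-lemma}: a satisfying assignment yields a valid mapping with $\mapV(v_i) \in \AssI$ and $|\mapE(v_i,v_j)| = 1$. Since the layers $\AssI$ are pairwise disjoint, $\mapV$ is injective and respects unit node capacity; since a substrate edge $(a_{i,m},a_{j,n})$ is uniquely determined by its endpoints, distinct request edges map to distinct substrate edges and respect unit edge capacity. The mapping is therefore feasible.

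The reverse direction rests on two structural observations: (i) $\SG$ is a DAG whose every edge strictly progresses from some $\AssI$ to a later layer $\AssI[j]$ with $i<j$, and contains no within-layer edges (distinct satisfying assignments of a single clause disagree on at least one literal); (ii) unit node capacity forces $\mapV$ to be injective, so, writing $\mapV(v_i) \in \AssI[\pi(i)]$, every request edge $(v_i,v_j)$ must be routed along a non-empty forward substrate path, yielding $\pi(i) < \pi(j)$. Preprocessing $\formula$ in a satisfiability-preserving way so that $\VG$ contains a Hamiltonian chain $v_1 \to v_2 \to \cdots \to v_M$ (e.g., by injecting, for each pair $\cla, \cla[i+1]$, a bridge literal first introduced in $\cla$ and reused in $\cla[i+1]$, together with auxiliary gadget clauses pinning down its truth value) forces $\pi$ to be the identity. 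Condition~(2) of Lemma~\ref{lem:base-lemma} then follows from unit edge capacity and a local accounting along the chain: any length-$\geq 2$ detour for a non-chain request edge would collide with the direct embedding of one of the chain's edges, violating unit edge capacity.

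The main obstacle I anticipate is designing the bridge-literal preprocessing so that it simultaneously preserves equisatisfiability and endows each bridge with the correct first-occurrence pattern, without introducing unintended request edges or trivially satisfying the augmented formula. Once this is in place, Lemma~\ref{lem:base-lemma} returns a satisfying assignment. The whole reduction is polynomial, and $\var VE - $ lies in $\compNP$ via the IP of Section~\ref{sec:ip-formulation}; Lemma~\ref{lem:theorem-factory} then delivers both the $\NPcompleteness$ of $\var VE - $ and its inapproximability under any objective unless $\compPeqNP$.
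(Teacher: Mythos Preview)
Your proposal diverges from the paper's argument and, as stated, does not go through. The paper never uses unit capacities; instead it assigns graded demands and capacities $\Vcap(v_i)=\Scap(a_{i,m})=1+\lambda(M-i)$ on nodes and $\Vcap(e)=\Scap(e')=1+\lambda j$ on edges entering layer~$j$ (with $\lambda<1/M$). The decreasing node capacities force $\mapV(v_j)\in\bigcup_{k\le j}\AssI[k]$, the increasing edge capacities force $\mapV(v_j)\in\bigcup_{k\ge j}\AssI[k]$ whenever $v_j$ has an incoming edge, and Lemma~\ref{lem:request-normalization} supplies those incoming edges. The same edge-capacity grading also kills multi-hop routes: a first intermediate node in layer $k<j$ is only reachable via an edge of capacity $1+\lambda k<1+\lambda j=\Vcap(v_i,v_j)$. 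No preprocessing of $\formula$ is needed.

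Your unit-capacity route has a genuine gap in the reverse direction. Even granting a Hamiltonian chain and hence $\mapV(v_i)\in\AssI$, your justification of condition~(2) is incorrect: a length-$\geq 2$ route for a non-chain edge $(v_i,v_j)$ may pass through a node $a_{k,l}\in\AssI[k]$ with $l\neq$ the index of $\mapV(v_k)$, using substrate edges disjoint from the (single) edge carrying the chain edge $(v_k,v_{k+1})$, so no collision with unit edge capacity is forced. Worse, such a two-hop route via layer~$k$ need not certify agreement on literals shared by $\cla$ and $\cla[j]$ (a literal in $\LitIJ$ absent from $\cla[k]$ is unconstrained at $a_{k,l}$), so a feasible embedding could exist for an unsatisfiable $\formula$. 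The bridge-literal preprocessing you flag as the ``main obstacle'' is therefore not the only obstacle; the edge-capacity argument would have to be replaced entirely, and the simplest fix is precisely the graded capacities the paper uses.
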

\begin{proof}
We show the statement via a polynomial-time reduction from $\ThreeSAT$ according to Lemmas~\ref{lem:base-lemma} and \ref{lem:theorem-factory}. Given is a $\ThreeSAT$ formula $\formula$. We assume for now that the first statement of Lemma~\ref{lem:request-normalization} holds, i.e. that within the request graph $\VG$ only the first node $v_1 \in \VV$ has no incoming edge.

To enforce the properties of Lemma~\ref{lem:base-lemma}, we set the substrate and request capacities for some small $\lambda$, $0 < \lambda < 1/|\Cla|$, as follows.
The capacity of substrate nodes is determined by the clause whose assignments they represent. Furthermore, the capacities decrease monotonically with each clause. Similarly, but now increasing per clause, the capacities of edges are determined by the clause that the edge's head corresponds to:
\begin{alignat*}{7}
\Scap(\ass[i,m]) &\,= \,&& 1 + \lambda \cdot (M - i) ~~&&\forall \,\cla \in \Cla, \ass[i,m] \in \AssI \\
 \Scap(e) &\,=\, && 1 + \lambda \cdot i && \forall \, \cla \in \Cla, e \in \delta^-(\AssI)
\end{alignat*}

The demands are set to match the respective capacities:
\begin{alignat*}{7}
~~~ \Vcap(v_i) &\,=\,&& 1 + \lambda \cdot (M - i) ~~&&\forall \,v_i \in \VV \\
 \Vcap(e) &\,=\,&& 1 + \lambda \cdot i && \forall \, v_j \in \VV, e \in \delta^-(v_j)
\end{alignat*}

Due to the decreasing node demands and capacities, virtual node $v_j \in \VV$ corresponding to clause $\cla$ can only be mapped on substrate nodes $\bigcup^j_{k=1} \AssI[k]$. Due to the choice of $\lambda$, the capacity of any substrate node is less than $2$ while each virtual node has a demand larger than $1$. Hence,  two virtual nodes can never be collocated (mapped) on the same substrate node. Thus, all virtual edges must be mapped onto at least a single substrate edge. Considering the virtual edge $e=(v_i,v_j) \in \VE$ with demand $\Vcap(e) = 1 + \lambda \cdot j$, the virtual node $v_j$ must be mapped on a substrate node having an incoming edge of at least capacity $ 1+ \lambda \cdot j$. As the edge capacities increase with the clause index, only the substrate nodes in $\bigcup^M_{k=j} \AssI[k]$ satisfy this condition. Hence, if node $v_j$ has an incoming edge, it can only be mapped on nodes in $\bigcup^j_{k=1} \AssI[k] \cap \bigcup^M_{k=j} \AssI[k] = \AssI[j]$. As we assumed that the first statement of Lemma~\ref{lem:request-normalization} holds for $\formula$ and hence all nodes $v_2,\ldots,v_M$ have an incoming edge, we obtain that the virtual node $v_i$ must be mapped on $\AssI \subseteq \SV$ for $i = 2,\ldots,M$. Considering the first node $v_1$, we observe that only nodes in $\AssI[1]$ offer sufficient capacity to host $v_1$. Hence, any feasible embedding will obey the first statement of Lemma~\ref{lem:base-lemma} regarding the node mappings.

We now show that \emph{any} feasible mapping will also obey the second property of Lemma~\ref{lem:base-lemma}, namely, that any virtual edge is mapped on exactly one substrate edge. To this end, assume for the sake of contradiction that $(v_i,v_j) \in \VE$ is not mapped on a single substrate edge. As $v_i$ must be mapped on some node $a_{i,m} \in \AssI$ and $v_j$ must be mapped on some node $a_{j,n} \in \AssI[j]$, and as both the request and the substrate are directed acyclic graphs, the mapping of edge $(v_i,v_j)$ must route through at least one intermediate node. Denote by $a_{k,l} \in \AssI[k]$ for $i < k < j$ the first intermediate node via which the edge $(v_i,v_j)$ is routed. By construction, the capacity of the substrate edge $(a_{i,m}, a_{k,l})$ is $1 + \lambda \cdot k$. However, as $k < j$ holds and the edge $(v_i,v_j)$ has a demand of $1 + \lambda \cdot j$, the edge $(v_i,v_j)$ cannot be routed via $a_{k,l}$. Hence, the only feasible edges for embedding the respective virtual edges are the direct connections between any two substrate nodes. 

Therefore, all \emph{feasible} solutions indicate the satisfiability of the formula $\formula$. Any algorithm computing a feasible solution to the $\VNEP$ obeying node and edge capacities, decides $\ThreeSAT$.

Lastly, we argue for the validity of our assumption on the structure of $\formula$, namely that the first statement of Lemma~\ref{lem:request-normalization} holds. If this were not to hold, then the second statement of Lemma~\ref{lem:request-normalization} holds true and the formula can be decomposed (potentially multiple times) into disjoint subformulas $\varphi_1,\ldots, \varphi_k$, such that (i) $\varphi = \bigwedge^k_{i=1} \varphi_i$ holds, and (ii) such that the first condition of Lemma~\ref{lem:request-normalization} holds for each subformula. Accordingly, assuming that an algorithm exists which can construct feasible embeddings whenever they exist, this algorithm can be used to decide the satisfiability of each subformula, hence deciding the original satisfiability problem.
\end{proof}

\subsection{$\NPcompleteness$ under Additional Constraints}

Building on the above $\NPcompleteness$ proof, we can adapt it easily to other settings.

\begin{theorem}
\label{thm:decision-edge-capacities-node-placement}
$\VNEP$ $\var E N $ is $\NPcomplete$ and cannot be approximated under any objective (unless $\compPeqNP$). 
\end{theorem}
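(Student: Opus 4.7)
The plan is to adapt the proof of Theorem~\ref{thm:vnep-capacity-nodes-and-edges-is-np-complete} to the setting $\var E N $, where node capacities are disabled (set to $\infty$) but node placement restrictions become available. The backbone is again a polynomial-time reduction from $\ThreeSAT$ via the base construction of Section~\ref{sec:vnep-instance-construction}, invoking Lemmas~\ref{lem:base-lemma} and \ref{lem:theorem-factory}. As in the previous proof, I would first assume (using Lemma~\ref{lem:request-normalization}) that the clauses of $\formula$ are ordered so that only $v_1 \in \VV$ has no incoming edge; the decomposition argument at the end of the previous proof carries over verbatim to justify this.

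With node capacities unavailable, the first condition of Lemma~\ref{lem:base-lemma} ($\mapV(v_i) \in \AssI$) will be enforced purely through node placement restrictions: for each $v_i \in \VV$ I set $\VVlocForbidden[i] = \SV \setminus \AssI$, so that the only substrate nodes available for $v_i$ are exactly those representing satisfying assignments of $\cla$. Because the sets $\AssI$ are pairwise disjoint as subsets of $\SV$ by construction, the placement restrictions automatically rule out collocation of two virtual nodes on the same substrate node, which previously was handled by the bound $\Scap(\ass[i,m]) < 2$. The second condition of Lemma~\ref{lem:base-lemma} (each virtual edge uses exactly one substrate edge) is then enforced by keeping the edge capacities and demands from the earlier proof, namely
\begin{alignat*}{7}
\Scap(e) &\,=\, && 1 + \lambda \cdot i & \qquad &\forall\, \cla \in \Cla,\ e \in \delta^-(\AssI), \\
\Vcap(e) &\,=\, && 1 + \lambda \cdot j & &\forall\, v_j \in \VV,\ e \in \delta^-(v_j),
\end{alignat*}
for some $0 < \lambda < 1/|\Cla|$. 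Since $\SG$ is a DAG (edges only go from $\AssI$ to $\AssI[j]$ with $j > i$), any multi-hop realization of a virtual edge $(v_i,v_j)$ would traverse a first intermediate node $a_{k,l} \in \AssI[k]$ with $i < k < j$; but the incoming substrate edge to $a_{k,l}$ has capacity $1 + \lambda \cdot k < 1 + \lambda \cdot j = \Vcap(v_i,v_j)$, a contradiction. Hence each virtual edge must be routed over a single substrate edge.

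Combining these two observations, any feasible embedding of $\VG$ on $\SG$ satisfies both hypotheses of Lemma~\ref{lem:base-lemma}, and conversely any satisfying assignment of $\formula$ yields a mapping that respects both the placement restrictions and the edge capacities. This gives the desired polynomial-time reduction, establishing $\NPhardness$; membership in $\compNP$ follows from the IP of Section~\ref{sec:ip-formulation}. Inapproximability under any objective then follows from Lemma~\ref{lem:theorem-factory}.

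The main obstacle is essentially bookkeeping: since there is no longer a numeric bound on the nodes to rule out collocation, one must verify that disjointness of the $\AssI$ sets together with the placement restrictions plays exactly the role that the strict capacity bound $\Scap(\ass[i,m]) < 2$ played before. Once this is checked, the edge-capacity argument and the structural/DAG reasoning carry over unchanged from the previous proof.
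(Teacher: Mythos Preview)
Your proposal is correct and follows essentially the same approach as the paper: use node placement restrictions $\VVlocForbidden[v_i] = \SV \setminus \AssI$ to force condition~(1) of Lemma~\ref{lem:base-lemma}, and reuse the edge capacities/demands from Theorem~\ref{thm:vnep-capacity-nodes-and-edges-is-np-complete} to force condition~(2). One small simplification: once placement restrictions pin each $v_i$ directly into $\AssI$, the appeal to Lemma~\ref{lem:request-normalization} (and the associated decomposition argument) is no longer needed, since the incoming-edge reasoning it supported in the earlier proof has been replaced outright.
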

\begin{proof}
In this setting node placement restrictions and substrate edge capacities are enforced.
We apply the same construction as in the proof of Theorem~\ref{thm:vnep-capacity-nodes-and-edges-is-np-complete}. Employing the node placement restrictions, we can force the mapping of virtual node $v_i \in \VV$ onto substrate nodes $\AssI$ by setting $\VVlocForbidden[v_i] = \SV \setminus \AssI$ for all $v_i \in \VV$. By the same argument as before, virtual edges cannot be mapped onto paths as the intermediate nodes do not support the respective demand.
\end{proof}

\begin{theorem}
\label{thm:node-capacities-routing-restrictions}
$\VNEP$ $\var V R $ is $\NPcomplete$ and cannot be approximated under any objective (unless $\compPeqNP$).
\end{theorem}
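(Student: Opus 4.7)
The plan is to reduce from $\ThreeSAT$ by combining Lemmas~\ref{lem:base-lemma} and~\ref{lem:theorem-factory} with the substrate and request construction of Definitions~\ref{def:substrate-construction} and~\ref{def:request-construction}. As in the proof of Theorem~\ref{thm:vnep-capacity-nodes-and-edges-is-np-complete}, I first normalize via Lemma~\ref{lem:request-normalization} so that $v_1$ is the only source in $\VG$; the alternative case is handled by the same splitting argument used there. Membership in $\compNP$ is immediate from the IP of Section~\ref{sec:ip-formulation}.

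To force the node-mapping condition of Lemma~\ref{lem:base-lemma}, I reuse the node capacities and demands of Theorem~\ref{thm:vnep-capacity-nodes-and-edges-is-np-complete}:
\[
 \Scap(\ass[i,m]) \,=\, \Vcap(v_i) \,=\, 1 + \lambda \cdot (M - i), \quad 0 < \lambda < 1/M,
\]
while leaving all edge capacities at $\infty$. Exactly as before, the monotone decrease of demands with clause index pins $v_i$ to $\bigcup_{k\leq i}\AssI[k]$, and since every substrate node has capacity below $2$ while every virtual node demands more than $1$, no two virtual nodes can be collocated; hence each virtual edge must be routed through at least one substrate edge.

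To enforce the second condition of the base lemma in the absence of edge capacities, I will use routing restrictions. For each virtual edge $(v_i,v_j) \in \VE$ I set
\[
 \VElocForbidden[v_i,v_j] \,=\, \SE \setminus (\AssI \times \AssI[j]),
\]
so that the only substrate edges left available for $(v_i,v_j)$ are those going directly from a clause-$i$ assignment to a clause-$j$ assignment. By Definition~\ref{def:substrate-construction} every such edge starts in $\AssI$ and ends in $\AssI[j]$; hence any walk in $\SG$ composed solely of allowed edges has length at most one. Combined with the no-collocation property, this forces $|\mapE(v_i,v_j)|=1$ and simultaneously pins $\mapV(v_i)\in\AssI$ and $\mapV(v_j)\in\AssI[j]$. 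Every $v_j$ with $j>1$ has at least one incoming virtual edge by the normalization and is therefore pinned into $\AssI[j]$; $v_1$ is pinned into $\AssI[1]$ directly by the node capacity, since these are the only substrate nodes offering capacity $\geq 1+\lambda(M-1)$. Both conditions of Lemma~\ref{lem:base-lemma} thus hold for every feasible embedding, completing the reduction and, via Lemma~\ref{lem:theorem-factory}, yielding the inapproximability statement.

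The main subtlety compared with Theorem~\ref{thm:vnep-capacity-nodes-and-edges-is-np-complete} is that edge capacities previously did double duty there -- blocking multi-edge paths \emph{and} pinning each head endpoint via monotonically growing incoming capacities -- whereas here routing restrictions must cover both roles on their own. This turns out to be essentially free thanks to the layered structure of $\SG$: once the allowed edges for $(v_i,v_j)$ are confined to the single layer transition $\AssI \to \AssI[j]$, no longer path through intermediate sets $\AssI[k]$ is even available. I therefore anticipate no deeper obstacle beyond verifying these two forcing arguments carefully.
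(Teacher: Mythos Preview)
Your proposal is correct and follows essentially the same approach as the paper: reuse the node capacities of Theorem~\ref{thm:vnep-capacity-nodes-and-edges-is-np-complete} to forbid collocation and to pin $v_1$ to $\AssI[1]$, and set $\VElocForbidden[v_i,v_j] = \SE \setminus (\AssI[i] \times \AssI[j])$ so that every virtual edge is forced onto a single direct substrate edge, which in turn pins each $v_j$ with an incoming edge to $\AssI[j]$. Your explicit observation that the allowed edge set $\AssI[i]\times\AssI[j]$ cannot support any path of length greater than one (because $\AssI[i]$ and $\AssI[j]$ are disjoint) makes the single-edge conclusion slightly more transparent than the paper's terser phrasing, but the argument is the same.
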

\begin{proof}
In this setting only node capacities must be obeyed, while routing restrictions may be introduced.
We employ the same node capacities as in the proof of Theorem~\ref{thm:vnep-capacity-nodes-and-edges-is-np-complete}, such that virtual node $v_i \in \VV$ may only be mapped on nodes $\bigcup^i_{k=1} \AssI[k]$. Routing restrictions are set to only allow direct edges, i.e. $\VElocForbidden[v_i,v_j] = \SE \setminus (\AssI[i] \times \AssI[j])$ holds for each $(v_i,v_j) \in \VE$. Again, $v_1 \in \VV$ must be mapped on a node in $\AssI[1]$, while all other virtual nodes have at least one incoming edge according to Lemma~\ref{lem:request-normalization}. As multiple virtual nodes cannot be placed on the same substrate node and virtual edges must span at least one substrate edge, any node $v_j$ can only be mapped on nodes in $\AssI[j]$ for $j \in \{2,\ldots, M\}$. Together with the routing restrictions both requirements of Lemma~\ref{lem:base-lemma} are safeguarded and the result follows.
\end{proof}

\begin{theorem}
$\VNEP$s $\var - NR $ and $\var - NL $ are $\NPcomplete$ and cannot be approximated under any objective (unless $\compPeqNP$).
\label{thm:without-capacities-node-placement-and-routing-or-latencies}
\end{theorem}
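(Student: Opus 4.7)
The plan is to reduce from $\ThreeSAT$ by invoking Lemmas~\ref{lem:base-lemma} and~\ref{lem:theorem-factory} on the substrate graph $\SG$ and request graph $\VG$ built in Section~\ref{sec:vnep-instance-construction}, exactly as in Theorems~\ref{thm:vnep-capacity-nodes-and-edges-is-np-complete}--\ref{thm:node-capacities-routing-restrictions}. I will first assume that $\formula$ admits the ordering guaranteed by the first case of Lemma~\ref{lem:request-normalization}, otherwise decomposing $\formula$ into literal-disjoint subformulas that are reduced independently. The distinctive challenge here is that no capacity lever is available, so both properties of Lemma~\ref{lem:base-lemma} must be enforced purely through node placement together with either routing or latency restrictions.

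Property~(1) of the base lemma can be enforced in both variants by the node placement restriction $\VVlocForbidden[v_i] = \SV \setminus \AssI$ for every $v_i \in \VV$. The critical point, which I would single out as the step requiring the most care, is that the sets $\{\AssI\}_{i \in [M]}$ are pairwise disjoint by construction: a satisfying assignment for $\cla[i]$ corresponds to a substrate node distinct from one for $\cla[j]$, even when the two clauses share literals. This disjointness alone rules out collocation of two virtual nodes on the same substrate node without any appeal to capacities, taking over the role that node capacities played in the earlier theorems; consequently every virtual edge $(v_i,v_j) \in \VE$ must be mapped onto a non-empty substrate path from some node of $\AssI$ to some node of $\AssI[j]$.

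For $\var - NR $, I would then enforce property~(2) via per-edge routing restrictions of the form $\VElocForbidden[v_i,v_j] = \SE \setminus (\AssI \times \AssI[j])$. Every allowed substrate edge for the mapping of $(v_i,v_j)$ then starts in $\AssI$ and ends in $\AssI[j]$; concatenating two or more such edges would require re-entering $\AssI$ from $\AssI[j]$, which is impossible, so $|\mapE(v_i,v_j)| = 1$ follows. For $\var - NL $, I would instead assign unit latency $\Slat(e) = 1$ to every substrate edge and set $\Vlat(i,j) = 1$ for every virtual edge. Since collocation is already forbidden by the disjointness argument, each $\mapE(v_i,v_j)$ must contain at least one edge, and any path of length at least two would incur latency at least $2 > 1$, forcing length exactly one. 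In both cases both conditions of Lemma~\ref{lem:base-lemma} hold, so Lemma~\ref{lem:theorem-factory} closes the reduction and yields $\NPcompleteness$ along with inapproximability of any objective unless $\compPeqNP$.
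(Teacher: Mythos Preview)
Your proposal is correct and follows essentially the same approach as the paper: node placement restrictions $\VVlocForbidden[v_i] = \SV \setminus \AssI$ secure property~(1) of Lemma~\ref{lem:base-lemma}, while the routing restrictions $\VElocForbidden[v_i,v_j] = \SE \setminus (\AssI \times \AssI[j])$ (for $\var - NR $) or unit latencies with unit latency bounds (for $\var - NL $) secure property~(2). You are more explicit than the paper in spelling out the disjointness of the sets $\AssI$ to exclude collocation, and your invocation of Lemma~\ref{lem:request-normalization} is harmless but unnecessary here, since node placement already pins each $v_i$ to $\AssI$ directly without relying on the incoming-edge structure.
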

\begin{proof}
In both cases capacities are not considered at all. Allowing for node placement restrictions, the first property of Lemma~\ref{lem:base-lemma} is easily safeguarded (cf. proof of Theorem~\ref{thm:decision-edge-capacities-node-placement}). By employing the same routing restrictions as in the proof of Theorem~\ref{thm:node-capacities-routing-restrictions} the result follows directly for the case $\var - NR $. Latency restrictions can be easily used to enforce that virtual edges do not span more than a single substrate edge. Concretely, we set unit substrate edge latencies and unit virtual edge latency bounds: if an edge was to be embedded via more than one edge, the latency restrictions would be violated. Hence, the result also holds for $\var - NL $.
\end{proof}

\section{$\NPcompleteness$ of Computing \\Approximate Embeddings} \label{sec:np-completeness-approximate-embeddings}

Given the hardness results presented in Section~\ref{sec:np-completeness-of-the-vnep}, the question arises to which extent the hardness can be overcome when only computing approximate embeddings (cf. Definition~\ref{def:approximate-embeddings}), i.e. embeddings that may violate capacity constraints or exceed latency constraints by certain factors. Based on the proofs presented in Section~\ref{sec:np-completeness-of-the-vnep}, we first derive hardness results for computing $\alpha$-approximate embeddings (allowing node capacity violations) and $\gamma$-approximate embeddings (allowing latency violations). For $\beta$-approximate embeddings, we give a reduction from a variant of the edge-disjoint paths problem.

\begin{theorem}
\label{thm:alpha-approximate-embeddings}
For $\var VE - $ and $\var V R $ finding an $\alpha$-approximate embedding is $\NPcomplete$ as well as inapproximable under any objective (unless $\compPeqNP$) for any $\alpha < 2 $.
\end{theorem}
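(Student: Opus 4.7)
The plan is to reduce from $\ThreeSAT$ by slightly modifying the reductions of Theorems~\ref{thm:vnep-capacity-nodes-and-edges-is-np-complete} and~\ref{thm:node-capacities-routing-restrictions} so that the $\alpha$-slack on the node-capacity constraints is absorbed by shrinking the substrate node capacities. Given a $\ThreeSAT$ formula $\formula$, I would reuse the substrate $\SG$ and request $\req$ from Definitions~\ref{def:substrate-construction}--\ref{def:request-construction} together with Lemmas~\ref{lem:base-lemma},~\ref{lem:theorem-factory}, and~\ref{lem:request-normalization}, modifying only the substrate node capacities while leaving demands, edge capacities, and (for $\var V R $) routing restrictions exactly as in the respective base proof.

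Concretely, I would set $\Scap(\ass[i,m]) := (1+\lambda(M-i))/\alpha$, choose $\lambda > 0$ small enough that $1 + \lambda(M-1) < 2$ (which is possible precisely because $\alpha < 2$), and keep $\Vcap(v_i)=1+\lambda(M-i)$. The key observation is that after this rescaling, the $\alpha$-approximate node-capacity condition $A_{\map}(u) \le \alpha\Scap(u)$ unfolds to the original exact constraint $A_{\map}(u) \le 1+\lambda(M-i)$ used in the base proofs. Consequently, the placement hierarchy $\mapV(v_j)\in\bigcup_{k\le j}\AssI[k]$ is preserved, and two virtual nodes still cannot be collocated since any demand sum is at least $2$ and therefore strictly exceeds the largest effective capacity $1+\lambda(M-1)$. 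For $\var VE - $ I would keep the non-relaxed edge capacities $\Scap(e)=1+\lambda\cdot i$ untouched so that the edge-routing argument of Theorem~\ref{thm:vnep-capacity-nodes-and-edges-is-np-complete} continues to rule out indirect virtual-edge mappings; for $\var V R $ I would retain the routing restrictions $\VElocForbidden[v_i,v_j] = \SE\setminus(\AssI[i]\times\AssI[j])$ from Theorem~\ref{thm:node-capacities-routing-restrictions}, which already force each virtual edge onto a single substrate edge.

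Once these three ingredients---placement hierarchy, no collocation, and single-edge mappings---are in place, the original arguments transfer essentially verbatim. For $j \ge 2$, virtual node $v_j$ (which has an incoming edge by Lemma~\ref{lem:request-normalization}) is pinned to $\AssI[j]$ either via the exact edge-capacity argument (for $\var VE - $) or directly via routing restrictions (for $\var V R $); the source node $v_1$ is either the unique sink-free node under the ordering of Lemma~\ref{lem:request-normalization}, or $\formula$ decomposes into independent subformulas to which the argument applies recursively. The two conditions of Lemma~\ref{lem:base-lemma} are thus met, so an $\alpha$-approximate embedding exists if and only if $\formula$ is satisfiable, and Lemma~\ref{lem:theorem-factory} delivers both the $\NPcompleteness$ statement and the inapproximability under any objective. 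Membership in $\compNP$ follows from the IP formulation of Section~\ref{sec:ip-formulation} with the capacity right-hand sides scaled by $\alpha$.

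The delicate, tightness-determining step---and the expected main obstacle---is the collocation argument: it works only because the maximum $\alpha$-scaled capacity $1 + \lambda(M-1)$ stays strictly below the lower bound $2$ on the sum of any two demands. This gap is strictly positive exactly when $\alpha < 2$ (and $\lambda$ is chosen small enough) and collapses as $\alpha \to 2$, reflecting the fact that at the threshold $\alpha = 2$ the clause-to-node bijection underlying Lemma~\ref{lem:base-lemma} would no longer be forced by capacities alone.
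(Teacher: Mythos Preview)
Your rescaling idea has a genuine gap rooted in the distinction between validity and the cumulative capacity constraint. An $\alpha$-approximate embedding must still be a \emph{valid} mapping in the sense of Definition~\ref{def:valid-mapping}: each virtual node $v_j$ may only be placed on a substrate node $u$ with $\Scap(u) \ge \Vcap(v_j)$, and this per-node fitting condition is \emph{not} relaxed by the factor $\alpha$ (only the cumulative allocation bound of Definition~\ref{def:feasible-embedding} is; see Definition~\ref{def:approximate-embeddings}). If you set $\Scap(\ass[i,m]) = (1+\lambda(M-i))/\alpha$ while keeping $\Vcap(v_i)=1+\lambda(M-i)$, then for any $\alpha>1$ the node $v_1$ (whose demand $1+\lambda(M-1)$ is maximal) finds no substrate node of sufficient capacity, so $\VVloc[v_1]=\emptyset$ and no valid mapping exists --- regardless of whether $\formula$ is satisfiable. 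The reduction therefore fails in the direction ``$\formula$ satisfiable $\Rightarrow$ $\alpha$-approximate embedding exists''.

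The paper leaves the capacities of Theorem~\ref{thm:vnep-capacity-nodes-and-edges-is-np-complete} unchanged and uses precisely the (non-relaxed) validity condition to recover the placement hierarchy $\mapV(v_j)\in\bigcup_{k\le j}\AssI[k]$. The threshold $\alpha<2$ then enters only in the collocation step: writing $\alpha = 2-\varepsilon$ and choosing $\lambda < \varepsilon/(2|\Cla|)$ guarantees that the $\alpha$-relaxed capacity $(1+\lambda(M-i))\cdot(2-\varepsilon) < 2$ of any substrate node is still too small to host two virtual nodes. Your final paragraph correctly locates this as the tightness-determining step, but in your construction $\alpha<2$ is never actually used --- after rescaling, $\alpha\cdot\Scap(\ass[i,m]) = 1+\lambda(M-i)$ is below $2$ for small $\lambda$ independently of $\alpha$, which is another symptom that the rescaling has absorbed information it should not have.
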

\begin{proof}
Assume that there exists an algorithm computing $\alpha$-approximate embeddings for $\alpha = 2 - \varepsilon$, $0 < \varepsilon < 1$. We adapt the proofs of Theorem~\ref{thm:vnep-capacity-nodes-and-edges-is-np-complete} and \ref{thm:node-capacities-routing-restrictions} slightly. First, note that for $\alpha$-approximate mappings validity still has to hold according to Definition~\ref{def:approximate-embeddings}. Hence, by the decreasing node capacities the virtual node $v_j$ can only be mapped on substrate nodes $\bigcup^j_{k=1} \AssI[k]$. Furthermore, by either enforcing edge capacities or edge routing restrictions, the node $v_j$ can still only be mapped on $\AssI[j]$. Hence, the only missing piece to show that the respective proofs still hold is the fact that still at most a single virtual node can be mapped on a single substrate node. To ensure, that this still holds, we adapt the capacities. Concretely, we choose $\lambda$, such that $\lambda < \varepsilon/(2\cdot |\Cla|)$ holds. Hence, the capacity of any substrate node is less than $1+\varepsilon/2$. By relaxing the capacity constraints by the factor $2-\varepsilon$, the allowed substrate node allocations are upper bounded by $(1+\varepsilon/2)\cdot (2-\varepsilon) = 2 - \varepsilon - \varepsilon^2/2 < 2$. 
As the demand of any virtual node is larger than $1$, still at most a single virtual node can be mapped on a substrate node.
Hence, the respective proofs still apply and the results follow.
\end{proof}

Proving the $\NPcompleteness$ of $\gamma$-approximate embeddings goes along the same lines:

\begin{theorem}
\label{thm:gamma-approximate-embeddings}
For $\var - NL $ finding an $\gamma$-approximate embedding is $\NPcomplete$ as well as inapproximable under any objective (unless $\compPeqNP$) for any $\gamma < 2 $.
\end{theorem}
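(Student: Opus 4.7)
The plan is to adapt the $\var - NL $ part of Theorem~\ref{thm:without-capacities-node-placement-and-routing-or-latencies} essentially verbatim, exploiting that the strict inequality $\gamma < 2$ leaves no room for a latency-relaxed mapping to use two or more substrate edges. First, I would perform a polynomial-time reduction from $\ThreeSAT$ using the substrate graph $\SG$ and request graph $\VG$ of Definitions~\ref{def:substrate-construction} and~\ref{def:request-construction}, so that satisfiability of $\formula$ reduces, via Lemma~\ref{lem:base-lemma}, to the existence of a mapping meeting its two structural conditions on node and edge images.

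Second, I would enforce those two conditions using only the features available in $\var - NL $. Node placement restrictions are set by $\VVlocForbidden[v_i] = \SV \setminus \AssI$ for every $v_i \in \VV$, so that any valid mapping places $v_i$ on some node of $\AssI$; this immediately secures the first condition of Lemma~\ref{lem:base-lemma}. For the latency side, all substrate edges receive unit latency $\Slat(e) = 1$ and all virtual edges unit bound $\Vlat(i,j) = 1$. For any $\gamma$-approximate embedding with $\gamma < 2$, this forces
\[
\sum_{e \in \mapE(v_i,v_j)} \Slat(e) \;\leq\; \gamma \cdot 1 \;<\; 2,
\]
and hence $|\mapE(v_i,v_j)| \leq 1$. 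Since the sets $\AssI$ and $\AssI[j]$ are disjoint for distinct clauses by the construction of $\SV$, the endpoints $\mapV(v_i)$ and $\mapV(v_j)$ of any virtual edge cannot coincide, so $|\mapE(v_i,v_j)| \geq 1$, giving $|\mapE(v_i,v_j)| = 1$, which is precisely the second condition of Lemma~\ref{lem:base-lemma}.

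Applying Lemma~\ref{lem:base-lemma} then yields that a $\gamma$-approximate embedding exists if and only if $\formula$ is satisfiable, and Lemma~\ref{lem:theorem-factory} converts this polynomial-time reduction into the claimed $\NPcompleteness$ and inapproximability under any objective (membership in $\compNP$ follows, as elsewhere, from the integer program of Section~\ref{sec:ip-formulation}). The only delicate point -- indeed the only step that actually uses the relaxation factor -- is the latency arithmetic displayed above: one needs $\gamma \cdot \Vlat(i,j) < 2$ as a \emph{strict} inequality so that integer edge counts of $2$ or more are ruled out, which both pins the result to $\gamma < 2$ and shows that the constant $2$ is tight for this particular gadget.
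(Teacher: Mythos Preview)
Your proposal is correct and follows essentially the same approach as the paper: reuse the $\var - NL $ construction of Theorem~\ref{thm:without-capacities-node-placement-and-routing-or-latencies} with unit substrate latencies and unit latency bounds, and observe that $\gamma<2$ still forces every virtual edge onto a single substrate edge so that Lemma~\ref{lem:base-lemma} applies. If anything, you are slightly more explicit than the paper in ruling out the empty-path case via the disjointness of the $\AssI$, which the paper leaves implicit.
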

\begin{proof}
The proof of Theorem~\ref{thm:without-capacities-node-placement-and-routing-or-latencies} relied on the fact that due to the latency constraints each virtual edge must be mapped on a single substrate edge. As the latencies of substrate edges are uniformly set to 1 and all latency bounds are 1 as well, computing a $\gamma$-approximate embedding for $\gamma < 2$ implies that each virtual edge can still only be mapped on  a single substrate edge. Hence, the result of Theorem~\ref{thm:without-capacities-node-placement-and-routing-or-latencies} remains valid.
\end{proof}

For $\beta$-approximate embeddings we employ an inapproximability result of a variant of the edge-disjoint paths problem:

{
\renewcommand{\req}{\ensuremath{r(\text{dir})}}
\renewcommand{\SG}{\ensuremath{G}_{S(\text{dir})}}
\renewcommand{\SV}{\ensuremath{V}_{S(\text{dir})}}
\renewcommand{\SE}{\ensuremath{E}_{S(\text{dir})}}

\begin{definition}[$\DirEDPwC$~\cite{chuzhoy2007hardness}]
\label{def:DirEDPwC}
The Directed Edge-Disjoint Paths Problem with Congestion ($\DirEDPwC$) is defined as follows.
Given is a directed graph $G=(V,E)$ together with a set of $l \in \mathbb{N}$ source-sink pairs (commodities) $\{(s_k,t_k)\}_{k \in [l]}$, $s_k,t_k \in V$, and a constant $c \in \mathbb{N}$. The task is to find a path $P_k$ connecting $s_k$ to $t_k$ for each $k \in [l]$, such that at most $c$ many paths are routed via any edge $e \in E$.
\end{definition}

We show reductions from $\DirEDPwC$ to the $\VNEP$ variants $\var E N $ and $\var VE - $, respectively:

\begin{lemma}
\label{lem:DirEDPwC-to-var-E-N}
$\DirEDPwC$ can be reduced to the $\VNEP$ variant $\var E N $. The reduction preserves approximations.
\end{lemma}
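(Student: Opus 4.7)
The plan is to write down a direct, almost syntactic reduction from $\DirEDPwC$ to $\var E N $ in which the congestion bound of $\DirEDPwC$ becomes the edge capacity of the substrate, while node placement restrictions pin each source/sink pair in place. Given an instance $(G = (V,E), \{(s_k,t_k)\}_{k \in [l]}, c)$ of $\DirEDPwC$, I would take $\SG$ to be a verbatim copy of $G$ with $\Scap(e) = c$ for every $e \in E$; node capacities are implicitly $\infty$ in $\var E N $. The request graph $\VG$ would consist of $2l$ virtual nodes $\{\sigma_k, \tau_k\}_{k \in [l]}$ joined by the $l$ virtual edges $\{(\sigma_k,\tau_k)\}_{k \in [l]}$, each of unit demand. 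The node placement restrictions would pin endpoints to their commodities by setting $\VVlocForbidden[\sigma_k] = V \setminus \{s_k\}$ and $\VVlocForbidden[\tau_k] = V \setminus \{t_k\}$.

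Next I would establish the correspondence between feasible $\DirEDPwC$ solutions and feasible embeddings. From paths $\{P_k\}_{k \in [l]}$ of congestion at most $c$, I define $\mapV(\sigma_k) = s_k$, $\mapV(\tau_k) = t_k$, and $\mapE(\sigma_k,\tau_k) = P_k$; since every virtual edge has unit demand, the allocation on any substrate edge equals its path-count and is therefore bounded by $c$. Conversely, for any feasible embedding the placement restrictions force the node mapping, Definition~\ref{def:valid-mapping} forces $\mapE(\sigma_k,\tau_k)$ to be a simple $s_k$--$t_k$ path, and the edge-capacity constraint of the embedding is precisely the congestion constraint of the $\DirEDPwC$ instance.

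For the approximation-preserving claim, I would observe that a $\beta$-approximate embedding (Definition~\ref{def:approximate-embeddings}) satisfies $A_m(u,v) \leq \beta \cdot c$ for every $(u,v) \in \SE$, which is exactly a $\DirEDPwC$ solution with congestion at most $\lfloor \beta c \rfloor$; the correspondence runs in both directions and the violation factor carries across unchanged. I do not anticipate a genuine obstacle; the only points requiring a brief remark are that the request graph is naturally disconnected (which the model of Section~\ref{sec:formal-model} admits), that commodities sharing an endpoint are unproblematic because $\var E N $ ignores node capacities and so multiple virtual nodes may be collocated, and that the construction is evidently linear in the size of the $\DirEDPwC$ input.
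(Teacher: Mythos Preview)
Your proposal is correct and follows essentially the same construction as the paper: identical substrate, a request consisting of one two-node gadget per commodity with unit edge demand, node placement restrictions pinning endpoints, and substrate edge capacities equal to $c$. If anything, you spell out the two directions of the correspondence and the $\beta$-approximation preservation more carefully than the paper, which simply declares the equivalence ``apparent''.
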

\begin{proof}
Given a $\DirEDPwC$ instance, we employ its original graph $G=(V,E)$ as substrate, i.e. $\SG = G$, and define the request graph $\VG=(\VV,\VE)$ as follows. $\VV$ consists of two virtual nodes per commodity, \mbox{$\VV = \{i_k,j_k | k \in [l]\}$}, and we set \mbox{$\VE = \{(i_k,j_k) | k \in [l]\}$}. Let $\sigma: \VV \to \SV$ denote the function indicating the substrate node onto which the respective virtual node shall be mapped: $\sigma(i_k) = s_k$ and $\sigma(j_k) = t_k$ for all $k \in [l]$.
We set node mapping requirements such that the virtual nodes $i_k$ and $j_k$ must be mapped on $s_k$ and $t_k$, respectively: $\VVlocForbidden[i] = V \setminus \{\sigma(i)\}$ holds for $i \in \VV$. Setting edge capacities in the substrate to $c$ (the congestion value) and virtual edge demands to $1$, the equivalence of both problems under this reduction becomes apparent.
\end{proof}

\newcommand{\occurence}{\ensuremath{N}}

\begin{lemma}
\label{lem:DirEDPwC-to-var-VE--}
$\DirEDPwC$ can be reduced to the $\VNEP$ variant $\var VE - $. The reduction preserves approximations.
\end{lemma}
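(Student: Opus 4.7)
The plan is to convert the reduction of Lemma~\ref{lem:DirEDPwC-to-var-E-N} into one that dispenses with node placement restrictions, emulating them entirely through node capacities. Given a $\DirEDPwC$ instance with graph $G=(V,E)$, commodities $(s_k,t_k)_{k\in[l]}$ and congestion bound $c$, I construct the substrate by taking $G$ and adjoining, for each $k$, two fresh anchor nodes $a_k,b_k$ with directed edges $(a_k,s_k)$ and $(t_k,b_k)$. Every original node $v\in V$ receives capacity $0$; every anchor node receives a pairwise \emph{distinct} capacity $\Scap(a_k)=2^{2k-1}$, $\Scap(b_k)=2^{2k}$; every original edge keeps capacity $c$; and the new anchoring edges receive capacity $c$ as well. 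The request graph consists of a virtual node pair $i_k,j_k$ per commodity with matching demands $\Vcap(i_k)=2^{2k-1}$, $\Vcap(j_k)=2^{2k}$, connected by a virtual edge $(i_k,j_k)$ of demand $1$.

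The heart of the argument is a capacity/weight step showing that every valid embedding is forced to map $i_k\mapsto a_k$ and $j_k\mapsto b_k$. Since original nodes have capacity $0$, virtual nodes can only land on anchors. The total demand $\sum_{k=1}^{l}(2^{2k-1}+2^{2k}) = 2^{2l+1}-2$ equals the total anchor capacity, leaving no slack. Descending induction on $k$ then pins the mapping: $b_l$ is the unique substrate node of capacity $\geq 2^{2l}$ and thus must host $j_l$, which exhausts its capacity exactly; $a_l$ is now the unique remaining substrate node of capacity $\geq 2^{2l-1}$ and must host $i_l$; iterating gives $j_k\mapsto b_k$ and $i_k\mapsto a_k$ for every $k$. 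This pigeonhole-by-exact-fit is the main technical crux, because in principle smaller virtual nodes could be ``folded'' into a larger anchor; however, the exponential weighting ensures that doing so would strand the top-demand virtual node elsewhere, contradicting validity.

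Once the node mapping is pinned, each virtual edge $(i_k,j_k)$ is forced, by Definition~\ref{def:valid-mapping}, to be a simple $a_k$--$b_k$ path. The only edge leaving $a_k$ is $(a_k,s_k)$ and the only edge entering $b_k$ is $(t_k,b_k)$, so every such path has the form $(a_k,s_k)\cdot P_k\cdot(t_k,b_k)$ for some simple $s_k$--$t_k$ path $P_k$ in $G$. The collection $(P_k)_{k\in[l]}$ is thus a $\DirEDPwC$ solution whose per-edge congestion in $G$ equals the load placed on that edge by the embedding. Conversely, any $\DirEDPwC$ solution lifts to a valid embedding by reversing the construction: node capacities are met exactly and each original edge carries at most $c$ virtual edges.

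For approximation preservation, observe that the forced node mapping relies solely on node capacities and therefore remains exact under any $\beta$-approximate embedding (Definition~\ref{def:approximate-embeddings} relaxes only edge capacities in the $\beta$-regime). A $\beta$-approximate embedding therefore yields $l$ paths $P_k$ with per-edge congestion at most $\beta c$, i.e., a factor-$\beta$ approximation to the $\DirEDPwC$ congestion; in the other direction, a $\DirEDPwC$ solution of congestion $\beta c$ lifts directly to a $\beta$-approximate $\var VE - $ embedding. Hence any $\beta$-approximation algorithm for $\var VE - $ transfers verbatim to a $\beta$-approximation for $\DirEDPwC$, enabling the $\beta$-inapproximability of $\var VE - $ to be inherited from that of $\DirEDPwC$~\cite{chuzhoy2007hardness}.
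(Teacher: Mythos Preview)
Your proof is correct and follows essentially the same strategy as the paper's: set the capacities of all original substrate nodes to zero and attach auxiliary anchor nodes whose capacities are calibrated so that an exact-fit argument forces the intended node mapping, after which the edge mappings are exactly the commodity paths. The paper's implementation differs only cosmetically---it attaches $O(v)$ copies per vertex $v$ (connected to $v$ in both directions), each with capacity $U(v)$ for a fixed enumeration $U:V\to[|V|]$, and argues the exact fit via the terse ``capacities cannot tally''---whereas you attach one anchor per commodity endpoint with pairwise-distinct power-of-two capacities and spell out the pigeonhole by a clean descending induction.
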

\begin{proof}
We use a similar construction as in the proof of Lemma~\ref{lem:DirEDPwC-to-var-E-N}. Now, instead of using node placement restrictions, we need to fix the virtual node mappings using a slightly different approach. Let $O: V \to \mathbb{N}$ denote the function counting how often a node occurs either as source or sink in the commodities. We adapt the substrate construction as follows. by the following rule. For each node $v \in \SV$, we add $O(v)$ many copies $\{v^1,v^2,\ldots,v^{O(v)}\}$ to the substrate graph and connect these to the original node $v$ of the substrate (in both directions). Now, let $U: V \to [|V|]$ denote any function that assigns each vertex a unique numeric identifier. We define substrate node capacities according to the following rule: all original nodes, $v \in \SV \cap V$, are assigned a capacity of $0$, while setting $\Scap(v^h) = U(v)$ for $v \in V$ and $h \in [O(v)]$. Accordingly, the demand of virtual nodes are set as follows: $\Vcap(i) = U(\sigma(i))$ for $i \in \VV$. We note the following properties: (i) the sum of available node capacities equals the sum of demanded capacities, and (ii) a virtual node $i \in \VV$ must be mapped on a substrate node $v^h \in \SV$ with $U(v^h) = U(\sigma(i))$, as otherwise node capacities cannot tally. As a copy $v^h \in \SV$ is only connected to its original node $v \in \SV \cap V$, any path starting at or leading to $v^h$ must be routed via $v$. Setting the edge capacities of original edges to $c$ while setting the capacity of edges incident to a copy $v^h$ to $1$, an equivalent $\var VE - $ instance is obtained. 
\end{proof}
}

It is well-known that the edge-disjoint paths problem on directed graphs is hard to approximate:
\begin{theorem}[Chuzhoy et al.~\cite{chuzhoy2007hardness}]
Let $n = |V|$ denote the number of nodes. The $\DirEDPwC$ is hard to approximate within $\Theta(\log n / \log \log n)$, unless $\compNP \subseteq \mathcal{BP}\text{-}\textit{TIME}(\bigcup_{d \geq 1} n^{d \log \log n})$ holds.
\end{theorem}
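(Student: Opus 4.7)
The plan is to reduce from a hard gap version of Label Cover, using a directed graph construction in which the only way to route all commodities with small congestion is to pick a consistent labeling. The appearance of $\mathcal{BP}\text{-}\textit{TIME}(\bigcup_{d \geq 1} n^{d \log \log n})$ in the conclusion strongly suggests that the source hardness should be a quasi-polynomial Label Cover gap obtained by applying Raz's parallel repetition a super-constant (roughly $\log \log n$) number of times to the Label Cover instance produced by the PCP theorem. This gives a problem where distinguishing value $1$ from value $1/\mathrm{polylog}(n)$ is hard under the stated complexity assumption, and the extra $\log \log n$ in parallel repetition is what shows up as the $\log \log n$ in the conclusion.

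First I would fix notation for the Label Cover instance, with left and right vertex sets $L, R$, a bipartite constraint graph, label sets, and a projection on each edge. Then I would design a directed graph gadget with one source--sink pair $(s_{e},t_{e})$ per Label Cover edge (or per vertex, depending on the flavor of the reduction). Each commodity would have a small family of ``canonical paths,'' one per label pair consistent with the projection constraint. The key design principle is that in the YES case (value $1$), picking the canonical paths corresponding to a satisfying labeling yields an edge-disjoint routing, i.e. congestion $c=1$; while in the NO case (value $\le 1/\mathrm{polylog}\, n$), any choice of paths routes too many commodities through common bottleneck edges, forcing congestion $\Omega(\log n / \log \log n)$.

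The hard part is constructing the gadget so that the congestion lower bound in the NO case is tight. This is where the published proof concentrates most of its technical effort: one needs a combination of (i) ``bottleneck'' sub-gadgets that force any non-canonical path to cross a small set of edges, and (ii) a global graph (often randomly sampled, which is why the reduction only places the problem in $\mathcal{BP}\text{-}\textit{TIME}$ rather than deterministic time) whose structure guarantees, with high probability, that many inconsistent canonical choices collide on the same edges. Directedness is exploited here to forbid ``shortcut'' routings; in an undirected graph the same gadget would leak. I would bound the NO-case congestion by a union-bound or entropy-style counting argument over inconsistent label pairs, using the Label Cover soundness to show that any low-congestion routing extracts a labeling of value much greater than $1/\mathrm{polylog}\, n$, a contradiction.

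Finally, I would check that the entire reduction runs in time polynomial in the Label Cover instance size and produces a graph with $n$ vertices such that the quoted inapproximability factor $\Theta(\log n / \log \log n)$ matches, after translating $n$ back through the quasi-polynomial blow-up from parallel repetition; this is the step that introduces the $\bigcup_{d \geq 1} n^{d \log \log n}$ in the conclusion. I expect the gadget design and the probabilistic analysis of the NO-case congestion to be the main obstacle; the Label Cover starting point and the final bookkeeping are comparatively standard.
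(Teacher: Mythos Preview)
The paper does not prove this theorem at all: it is quoted verbatim as a known result of Chuzhoy, Guruswami, Khanna, and Talwar and is used as a black box to derive Theorem~\ref{thm:beta-inapproximability} via the approximation-preserving reductions of Lemmas~\ref{lem:DirEDPwC-to-var-E-N} and~\ref{lem:DirEDPwC-to-var-VE--}. There is therefore nothing in the present paper to compare your proposal against.

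That said, your sketch is a reasonable high-level outline of the actual Chuzhoy et al.\ argument: the hardness does come from a Label-Cover-style source amplified by parallel repetition, the reduction is randomized (hence the $\mathcal{BP}$ in the conclusion), and the directedness of the host graph is indeed essential to block shortcut routings. If your goal were to reprove their result, the main technical work you flag --- the gadget enforcing high congestion in the NO case --- is exactly where the difficulty lies, and your plan does not yet contain the concrete idea that makes it go through. But for the purposes of this paper, no such proof is required or supplied; the theorem is simply cited.
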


Above, $\mathcal{BP}\text{-}\textit{TIME}(f(n))$ denotes the class of problems solvable by probabilistic Turing machines in time $f(n)$ with bounded error-probability~\cite{arora2009computational}. Given the approximation-preserving reductions presented above, the inapproximability of $\DirEDPwC$ carries over to the respective $\VNEP$ variants. 

\begin{theorem}
\label{thm:beta-inapproximability}
Finding a $\beta$-approximate embedding for the $\VNEP$ variants $\var VE - $ and $\var E N $ is hard to approximate for $\beta \in \Theta(\log n / \log \log n)$, $n = |V_S|$, unless $\compNP \subseteq \mathcal{BP}\text{-}\textit{TIME}(\bigcup_{d \geq 1} n^{d \log \log n})$ holds.
\end{theorem}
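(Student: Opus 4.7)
The plan is to use the two approximation-preserving reductions from $\DirEDPwC$ established in Lemmas~\ref{lem:DirEDPwC-to-var-E-N} and~\ref{lem:DirEDPwC-to-var-VE--}, and simply compose them with the Chuzhoy et al. inapproximability bound cited above. Concretely, I would start from a $\DirEDPwC$ instance $(G,\{(s_k,t_k)\}_{k\in[l]},c)$ and apply the appropriate lemma to obtain either a $\var E N $ or $\var VE - $ instance. I then argue that any hypothetical polynomial-time algorithm returning a $\beta$-approximate embedding for the constructed VNEP instance can be post-processed into a $\DirEDPwC$ routing with congestion at most $\beta \cdot c$, which by Chuzhoy et al. is impossible for $\beta \in \Theta(\log n/\log\log n)$ unless the stated randomized complexity containment holds.

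For the $\var E N $ case the extraction is immediate: the node placement restrictions force $i_k\mapsto s_k$ and $j_k\mapsto t_k$, so the mapping $m_E(i_k,j_k)$ is already an $s_k$-$t_k$ path in $G$, and the $\beta$-approximation relaxes each substrate edge capacity from $c$ to $\beta c$, giving exactly a $\beta$-approximate $\DirEDPwC$ solution. For $\var VE - $ I would first observe that the $\beta$-approximation in Definition~\ref{def:approximate-embeddings} only loosens \emph{edge} capacities, so the tight node-capacity accounting that underpinned Lemma~\ref{lem:DirEDPwC-to-var-VE--} (sum of demands equals sum of capacities; $U$ gives unique identifiers) still forces each virtual node onto the unique copy $v^h$ with $v=\sigma(i)$. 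The paths routing the virtual edges may in principle detour through intermediate copies $v^h$, but since each copy $v^h$ has $v$ as its only neighbour in the substrate, any such excursion has the shape $v\to v^h\to v$ and can simply be spliced out without breaking the path or increasing congestion on original edges. After this cleanup each virtual edge $(i_k,j_k)$ projects to an $s_k$-$t_k$ path in $G$, and since original substrate edges have capacity $c$, the $\beta$-approximation yields congestion at most $\beta c$.

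Finally I would verify that the parameter $n$ does not change in the wrong direction: in Lemma~\ref{lem:DirEDPwC-to-var-E-N} we have $|V_S|=|V|$, while in Lemma~\ref{lem:DirEDPwC-to-var-VE--} we have $|V_S|=|V|+2l$ which is polynomial in $|V|$. Hence $\log|V_S|/\log\log|V_S|=\Theta(\log|V|/\log\log|V|)$, and the Chuzhoy et al.\ inapproximability transfers without loss. Combining these pieces gives the claimed $\Theta(\log n/\log\log n)$ inapproximability of $\beta$-approximate embeddings for both $\var VE - $ and $\var E N $.

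The main obstacle I expect is the second case: convincing the reader that relaxing edge capacities by a factor $\beta>1$ in the construction of Lemma~\ref{lem:DirEDPwC-to-var-VE--} does not allow the algorithm to ``cheat'' by exploiting the copy gadgets, either by misrouting virtual edges through foreign copies or by over-using the unit-capacity copy-incident edges. The detour-removal observation above is the key point, and it relies crucially on the fact that copies are pendant attachments to their originals, so every extra traversal of a copy-incident edge can be paired up and removed without altering endpoints or load on $G$'s original edges.
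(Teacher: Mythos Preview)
Your proposal is correct and follows exactly the approach the paper takes: the paper's entire argument for Theorem~\ref{thm:beta-inapproximability} is the single sentence ``Given the approximation-preserving reductions presented above, the inapproximability of $\DirEDPwC$ carries over to the respective $\VNEP$ variants,'' so you are in fact supplying strictly more detail than the paper does. One small simplification you may want to exploit: by Definition~\ref{def:valid-mapping} the edge mapping $\mapE$ always returns a \emph{simple} path in $\SG$, and since each copy $v^h$ in the construction of Lemma~\ref{lem:DirEDPwC-to-var-VE--} is pendant (its only neighbour is $v$), no simple path can visit a copy as an intermediate node at all --- so your detour-splicing step is never actually needed, and the extracted $s_k$--$t_k$ walk in $G$ is automatically a path.
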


\section{$\NPcompleteness$ under Graph Restrictions}

\label{sec:np-completeness-graph-restrictions}

All of our $\NPcompleteness$ results (except Theorem~\ref{thm:beta-inapproximability}) are based on a reduction from $\ThreeSAT$, yielding a specific directed-acylic substrate graph $\SG$ and a specific directed acyclic request graph $\VG$ and we note the following:
\begin{observation}
\label{obs:vnep-on-dags}
Theorems~\ref{thm:vnep-capacity-nodes-and-edges-is-np-complete} - \ref{thm:gamma-approximate-embeddings}  still hold when restricting the request and the substrate to acyclic graphs.
\end{observation}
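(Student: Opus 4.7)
The plan is to observe that all reductions from $\ThreeSAT$ used in Theorems~\ref{thm:vnep-capacity-nodes-and-edges-is-np-complete}--\ref{thm:gamma-approximate-embeddings} are built on top of the substrate and request constructions of Definitions~\ref{def:substrate-construction} and~\ref{def:request-construction}, and that these constructions already produce directed acyclic graphs. Hence nothing further needs to be done besides verifying acyclicity of the base construction and checking that none of the follow-up reductions alters the edge sets.

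First, I would verify acyclicity of the substrate $\SG$. By Definition~\ref{def:substrate-construction}, an edge $(\ass[i,m], \ass[j,n]) \in \SE$ requires the existence of a literal $\lit \in \LitIJ$ with $\mathcal{C}(\lit) = i$. The condition $\mathcal{C}(\lit) = i$ means that $\lit$ \emph{first} occurs in clause $\cla[i]$, so $\lit$ cannot appear in any clause $\cla[i']$ with $i' < i$. Since $\lit$ is also required to occur in $\cla[j]$, one must have $j > i$. Thus every substrate edge strictly increases the clause index, which precludes cycles. The identical argument applied to Definition~\ref{def:request-construction} shows that every edge $(v_i, v_j) \in \VE$ satisfies $j > i$, so $\VG$ is acyclic as well.

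Second, I would inspect how each of Theorems~\ref{thm:vnep-capacity-nodes-and-edges-is-np-complete}--\ref{thm:gamma-approximate-embeddings} augments the base construction. In every case the modifications consist only of assigning substrate and virtual capacities, node placement sets $\VVlocForbidden$, routing restriction sets $\VElocForbidden$, or latency values $\Slat$ and $\Vlat$; none of these operations changes $\SV$, $\SE$, $\VV$, or $\VE$. Consequently, the acyclicity inherited from Definitions~\ref{def:substrate-construction} and~\ref{def:request-construction} carries over verbatim to the instances produced by all of those reductions.

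The only subtlety, and the reason for the explicit exclusion of Theorem~\ref{thm:beta-inapproximability} from the observation, is that this theorem relies on the reduction from $\DirEDPwC$ (Lemmas~\ref{lem:DirEDPwC-to-var-E-N} and~\ref{lem:DirEDPwC-to-var-VE--}), whose input graphs may themselves contain cycles; consequently the resulting substrates are not guaranteed to be acyclic. Beyond that caveat there is no genuine obstacle: the claim reduces to the index-monotonicity argument above, which I expect to be the only real content of the proof.
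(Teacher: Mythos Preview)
Your proposal is correct and matches the paper's own justification, which is nothing more than the sentence preceding the observation stating that the $\ThreeSAT$ reductions already yield directed acyclic substrate and request graphs; indeed, the proof of Theorem~\ref{thm:vnep-capacity-nodes-and-edges-is-np-complete} even invokes this DAG property explicitly. Your explicit index-monotonicity argument ($i<j$ for every edge) is more detailed than what the paper provides, but the underlying idea is identical.
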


Given the hardness of the $\VNEP$ and as for example Virtual Clusters (an undirected star network) can be optimally embedded in polynomial time~\cite{ccr15emb}, one might ask whether the hardness is preserved when restricting request graphs further.

In this section, we derive the result that the $\VNEP$ variants considered in this paper remain $\NPcomplete$ when request graphs are planar and degree-bounded. Our results are obtained by considering a \emph{planar} variant of $\ThreeSAT$. The planarity of a formula $\formula$ is defined according $\formula$'s graph interpretation:

\begin{definition}[Graph $\FG$ of formula $\formula$]
\label{def:graph-of-a-formula}
The graph $\FG=(\FV,\FE)$ of a SAT formula $\formula$ is defined as follows. $\FV$ contains a node $v_i$ for each clause $C_i \in \Cla$ and a node $u_k$ for each literal $x_k \in \Lit$. 
An undirected edge $\{v_i,u_k\}$ is contained in $\FE$, iff. the literal $x_k$ is contained in $C_i$ (either positive or negative).
Note that the graph $\FG$ is bipartite.
\end{definition}

An example for the interpretation $\FG$ is depicted in Figure~\ref{fig:3-sat-planar-illustration}.
Kratochv{\'\i}l~\cite{planar-3sat-special}  considered the following variant of $\SpecialThreeSAT$ and proved its $\NPcompleteness$.

\begin{definition/}[$\SpecialThreeSAT$]
\label{def:special-three-sat}
The 4-Bounded Planar 3-Connected 3-SAT ($\SpecialThreeSAT$) considers only $\ThreeSAT$ formulas $\formula$ for which the following holds:
\begin{description}
\item[\textnormal{(1)}] In each clause, exactly 3 distinct literals are used.
\item[\textnormal{(2)}] Each literal occurs in at most 4 clauses.
\item[\textnormal{(3)}] The graph $G_{\phi}$ is planar.
\item[\textnormal{(4)}] The graph $G_{\phi}$ is vertex 3-connected. \hspace*{\fill} $\square$
\end{description}
\end{definition/}

\begin{theorem}[\cite{planar-3sat-special}]
\label{thm:kratochvil-special-planar-3-sat}
$\SpecialThreeSAT$ is $\NPcomplete$. 
\end{theorem}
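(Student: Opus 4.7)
Membership in $\compNP$ is immediate: given a truth assignment, one verifies satisfaction and the structural properties in polynomial time (planarity testing via Hopcroft--Tarjan, $3$-connectivity via standard linear-time algorithms, and the syntactic bounds by inspection). The work is therefore to establish $\NPhardness$, which I would do by a chain of approximation-preserving local replacements starting from Planar $3$-SAT (Lichtenstein, 1982), which is already known to be $\NPcomplete$ and whose formula-graph $G_\phi$ is planar.

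My plan is to enforce the four conditions one at a time, each time only locally modifying the formula and arguing that the previous invariants are preserved. First, to enforce condition~(1) that every clause contains exactly three \emph{distinct} literals, I would pad $1$- and $2$-literal clauses by introducing a fresh dummy variable $d_C$ per short clause $C$ and replacing $C$ with clauses that are satisfied iff $C$ is. Any clause with a repeated literal can likewise be rewritten. These replacements only insert new leaves incident to the clause-node in $G_\phi$, so planarity is maintained. Second, to enforce condition~(2) that every literal occurs in at most $4$ clauses, I would apply the standard variable-splitting trick: if a variable $x$ appears in clauses $C_{i_1},\ldots,C_{i_p}$ with $p>4$, replace the $j$-th occurrence by a fresh copy $x^{(j)}$ and add a cyclic chain of equivalence clauses $x^{(1)} \Leftrightarrow x^{(2)} \Leftrightarrow \cdots \Leftrightarrow x^{(p)} \Leftrightarrow x^{(1)}$, each equivalence expanded into a constant number of $3$-clauses whose literals each appear only $O(1)$ times. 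Laying the chain along the cyclic ordering around the vertex $u_x$ in a planar embedding of $G_\phi$ keeps the graph planar, and the occurrence count at every literal is now a small constant which by further local adjustment can be pushed down to $4$.

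The main obstacle, and what will require the most care, is condition~(4): making $G_\phi$ vertex $3$-connected without destroying planarity, the degree bound, or condition~(3). After the above reductions the formula-graph is planar but generally only $1$- or $2$-connected, with cut-vertices appearing at padded variables, at clause-nodes connected by a single copy-chain, and along the boundaries of faces produced by the splitting step. I would attack this by adding auxiliary ``connector'' clauses whose role is purely structural: for each face of a planar embedding of the current $G_\phi$ of length $> 3$, install a constant-size gadget around that face, consisting of fresh variables and $3$-clauses, that (i) is always satisfiable regardless of the rest of the assignment, (ii) triangulates the face so that the resulting embedding has only triangular internal faces, which for a planar graph is sufficient (by Whitney's theorem on $3$-connected planar graphs) to guarantee $3$-connectivity, and (iii) does not raise any existing literal's occurrence count above $4$. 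The delicate part is designing this gadget so that all four conditions hold simultaneously: satisfiability-preservation forces one to use pairs of clauses that encode tautologies over the new variables, planarity and the degree bound force the gadget to be drawn inside the face with only the original boundary nodes as interface, and the occurrence bound forces the interface contribution per boundary vertex to be $O(1)$.

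Finally, I would argue correctness in two directions. Soundness: a satisfying assignment of the original Planar $3$-SAT formula extends (through the copy chains and the canonical values forced on the gadget variables) to a satisfying assignment of the transformed formula $\phi'$. Completeness: the equivalence chains force all copies of a variable to agree, and the connector gadgets are satisfiable independently, so any satisfying assignment of $\phi'$ projects to one of the original formula. Since the whole construction blows the formula up only polynomially, the reduction is polynomial, and together with Theorem~\ref{thm:kratochvil-special-planar-3-sat}'s appeal to the membership argument above, $\SpecialThreeSAT$ is $\NPcomplete$.
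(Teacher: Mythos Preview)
The paper does not prove this theorem at all: it is stated with a citation to Kratochv{\'\i}l's original article and then used as a black box in the subsequent Lemma~\ref{lem:properties-of-request-graphs-special-three-sat} and Theorem~\ref{thm:np-completeness-on-restricted-request-graphs}. So there is nothing to compare your attempt against; any proof you give here goes strictly beyond what the paper does.

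That said, your sketch has a genuine gap in the step you yourself flag as delicate. The formula graph $G_\phi$ of Definition~\ref{def:graph-of-a-formula} is \emph{bipartite}: every edge joins a clause node to a literal node. Consequently every face in any planar embedding of $G_\phi$ has even length, and in particular no face can be a triangle. Your plan to achieve $3$-connectivity by ``triangulating'' faces with connector gadgets is therefore structurally impossible while remaining a formula graph: whatever fresh variables and clauses you insert, the resulting $G_{\phi'}$ is again bipartite and again has no triangular faces, so the appeal to ``only triangular internal faces $\Rightarrow$ $3$-connected'' never fires. (Separately, that implication is a property of maximal planar graphs, not Whitney's uniqueness theorem, but this misattribution is minor compared to the bipartiteness obstruction.) Kratochv{\'\i}l's actual argument obtains $3$-connectivity by a bespoke gadget construction tailored to the bipartite clause--variable incidence structure rather than by triangulation; reproducing it is nontrivial, which is precisely why the present paper simply cites the result.

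A smaller point: your final sentence invokes Theorem~\ref{thm:kratochvil-special-planar-3-sat} inside its own proof, which reads as circular; you presumably meant to refer back to your membership paragraph directly.
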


The following lemma connects $\SpecialThreeSAT$ formulas $\formula$ with the corresponding request graphs $\VG$.

\begin{figure}[tb!]
		\centering
\includegraphics[width=0.99\columnwidth]{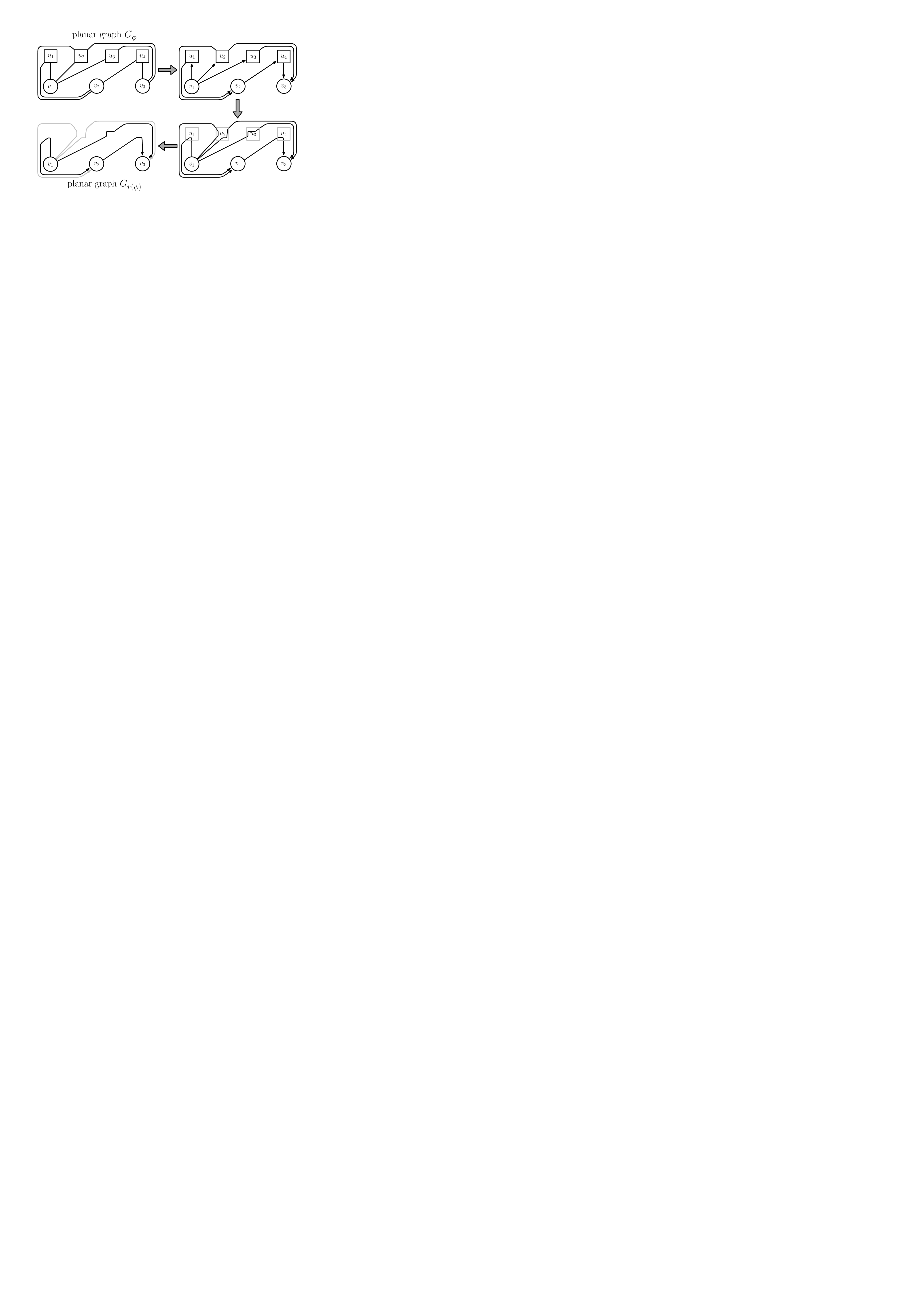}
\caption{Depicted is the transformation process of a planar graph $\FG$ \mbox{(cf. Definition~\ref{def:graph-of-a-formula})} to the planar graph $\VG$. Concretely, the example formula of Figure~\ref{fig:3-sat-illustration-new} is revisited, i.e. $\formula = C_1 \wedge C_2 \wedge C_3$, with $ C_1=x_1 \vee x_2 \vee x_3$, $C_2=\bar{x}_1 \vee x_2 \vee x_4$, and $C_3=x_2 \vee \bar{x}_3 \vee x_4$. 
In the first step all edges are directed, such that edges from clause nodes are oriented towards literal nodes iff. the literal occurs in the respective clause for the first time (according to the ordering of clause nodes). In the second step, each outgoing edge of a literal node is joined with the single incoming edge (duplicating it when necessary), hence allowing to remove the literal nodes. In the last step, duplicate edges are removed, yielding the request graph $\VG$. Each step of this transformation process safeguards the graph's planarity.}
\label{fig:3-sat-planar-illustration}
\end{figure}

\begin{lemma}
\label{lem:properties-of-request-graphs-special-three-sat}
Given a $\SpecialThreeSAT$ formula $\formula$, the following holds for the request graph $\VG$ (cf. Definition~\ref{def:request-construction}):
\begin{enumerate}
\item The request graph $\VG$ is planar.
\item The node-degree of $\VG$ is bounded by 12.
\end{enumerate}
\end{lemma}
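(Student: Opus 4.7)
The plan is to derive both properties from the corresponding properties of the graph $\FG$ (Definition~\ref{def:graph-of-a-formula}), which by the defining conditions of $\SpecialThreeSAT$ is planar and has bounded literal-occurrences. Concretely, I will realize (the underlying undirected graph of) $\VG$ as the result of applying planarity-preserving graph operations to $\FG$, mirroring the three-step transformation illustrated in Figure~\ref{fig:3-sat-planar-illustration}, and then count node-degrees directly from the clause structure.

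For planarity, I will construct the undirected version of $\VG$ from $\FG$ by a sequence of edge contractions followed by the removal of multi-edges. For each literal node $u_k \in \FV$, I contract the edge $\{u_k, v_{\mathcal{C}(x_k)}\}$ that connects $u_k$ to its first-occurrence clause node. After the contraction, the merged vertex inherits edges from $u_k$ to all other clause nodes $v_j$ containing $x_k$; these correspond precisely to the (undirected images of the) edges $(v_{\mathcal{C}(x_k)}, v_j)$ in $\VG$. Since $\FG$ is bipartite and each contraction pairs a literal node with exactly one clause node, no self-loops arise; any resulting multi-edges between clause nodes sharing more than one literal are safely discarded. Because both edge contraction and multi-edge removal preserve planarity, $\VG$ is planar. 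The main subtlety here is ensuring that no self-loops emerge during the contractions, which follows from the bipartite structure of $\FG$ together with the uniqueness of the first-occurrence clause for every literal.

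For the degree bound, I count the in-degree and out-degree of each clause node $v_i$ separately. By property~(1) of $\SpecialThreeSAT$, clause $C_i$ contains exactly three distinct literals. Each literal $x_k \in C_i$ either first occurs in $C_i$ --- contributing at most three outgoing edges from $v_i$, one to each of the at most three other clauses containing $x_k$ (property~(2)) --- or first occurs in an earlier clause $C_j$, contributing the single incoming edge $(v_j, v_i)$. Hence the out-degree is bounded by $3 \cdot 3 = 9$ and the in-degree by $3$, yielding a total node-degree of at most $12$.
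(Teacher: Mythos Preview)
Your proposal is correct and follows essentially the same approach as the paper. The paper describes the transformation from $\FG$ to $\VG$ as orienting edges, then ``joining'' each literal node's unique incoming edge with each of its outgoing edges, and finally removing duplicate edges; your phrasing of this as contracting the edge $\{u_k, v_{\mathcal{C}(x_k)}\}$ for every literal $x_k$ is exactly the same operation expressed in standard minor-theoretic language, and your degree count (out-degree $\leq 9$, in-degree $\leq 3$) refines the paper's coarser ``factor~$4$'' estimate while arriving at the same bound of~$12$.
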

\begin{proof}
We consider an arbitrary $\SpecialThreeSAT$ formula $\formula$ to which the conditions of Definition~\ref{def:special-three-sat} apply. We first show that the corresponding request graph $\VG$ is planar by detailing a transformation process leading from $\FG$ to $\VG$ while preserving planarity (see Figure~\ref{fig:3-sat-planar-illustration} for an illustration).

Starting with the undirected graph $\FG$, the edges are first oriented: an edge is oriented from a clause node to a literal node iff. the literal occurs in the respective clause for the first time according to the clauses' ordering. Note that while many reductions in Section~\ref{sec:np-completeness-of-the-vnep} required the reordering of clause nodes according to Lemma~\ref{lem:request-normalization}, this reordering preserves planarity as the structure of the graph $\FG$ does not change. 

Given this directed graph, the literal nodes are now removed by joining the single incoming edge of the literal nodes with \emph{each} outgoing edge of the corresponding literal node. In particular, consider the literal node $x_2$ of Figure~\ref{fig:3-sat-planar-illustration}: the single incoming edge $(C_1,x_2)$ is joined with the outgoing edges $(x_2,C_2)$ and $(x_2,C_3)$ to obtain the edges $(C_1,C_2)$ and $(C_1,C_3)$, respectively. As the duplication of the single incoming edge cannot refute planarity and all incoming and outgoing edges connect to the same node, the planarity of the graph is preserved in this step. Lastly, duplicate edges are removed to obtain the graph $\VG$, which is, in turn, planar.

It remains to show, that the request graph $\VG$ corresponding to $\formula$ exhibits a bounded node-degree of $12$ (in the undirected interpretation of the graph $\VG$). To see this, we note the following: based on the first two conditions of Definition~\ref{def:special-three-sat}, each clause node connects to exactly $3$ literal nodes and each literal node connects to at most $4$ clause nodes. Hence, when removing the literal nodes in the transformation process, the degree of each node may increase at most by a factor of $4$. As any clause node had $3$ neighboring literal nodes, this implies that the degree of any node is at most $12$ after the transformation process, completing the proof.
\end{proof}

Given the above, we easily derive the following theorem:

\begin{theorem}
\label{thm:np-completeness-on-restricted-request-graphs}
Theorems~\ref{thm:vnep-capacity-nodes-and-edges-is-np-complete} - \ref{thm:gamma-approximate-embeddings} hold when restricting the request graphs to be planar and / or degree 12-bounded. 
Theorem~\ref{thm:beta-inapproximability} holds for planar and degree 1-bounded graphs.
\end{theorem}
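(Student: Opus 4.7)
The plan is to observe that all of the previous hardness proofs factor cleanly through two ingredients: (i) the reductions in Theorems~\ref{thm:vnep-capacity-nodes-and-edges-is-np-complete}--\ref{thm:gamma-approximate-embeddings} go from $\ThreeSAT$ via the substrate/request construction of Section~\ref{sec:vnep-instance-construction} and Lemma~\ref{lem:base-lemma}, while (ii) the approximation-preserving reductions of Lemmas~\ref{lem:DirEDPwC-to-var-E-N} and~\ref{lem:DirEDPwC-to-var-VE--} used to obtain Theorem~\ref{thm:beta-inapproximability} produce request graphs of a trivially restricted shape. Hence the strategy is to simply swap the source problem of the reduction for an already-restricted variant and then invoke Lemma~\ref{lem:properties-of-request-graphs-special-three-sat}.

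For Theorems~\ref{thm:vnep-capacity-nodes-and-edges-is-np-complete}--\ref{thm:gamma-approximate-embeddings}, I would replay each reduction \emph{verbatim}, but starting from a $\SpecialThreeSAT$ instance $\formula$ rather than an arbitrary $\ThreeSAT$ instance; this is legitimate by Theorem~\ref{thm:kratochvil-special-planar-3-sat}. Since $\SpecialThreeSAT$ is a restriction of $\ThreeSAT$ (clauses still have $\leq 3$ literals), Definitions~\ref{def:substrate-construction} and~\ref{def:request-construction} apply unchanged, and Lemma~\ref{lem:base-lemma} still equates satisfiability of $\formula$ with the existence of a constrained mapping of $\VG$ on $\SG$. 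The individual capacity/latency/placement gadgets built on top of this reduction in Theorems~\ref{thm:vnep-capacity-nodes-and-edges-is-np-complete}--\ref{thm:gamma-approximate-embeddings} only inspect the clause index structure and the in-/out-edges of each request node; none of them depend on the general shape of $\VG$. By Lemma~\ref{lem:properties-of-request-graphs-special-three-sat}, the resulting $\VG$ is planar with node-degree at most $12$, which yields the desired restricted-graph hardness.

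A small loose end to check is the invocation of Lemma~\ref{lem:request-normalization}: several of the earlier proofs re-order clauses to ensure that only $v_1$ lacks an incoming edge, or else decompose $\formula$ into disjoint subformulas. Because condition~(4) of Definition~\ref{def:special-three-sat} forces $\FG$ to be vertex $3$-connected (in particular connected), the literal sets of any proposed decomposition cannot be disjoint, so the second branch of Lemma~\ref{lem:request-normalization} is vacuous and we are always in the first branch. Moreover, the reordering of clauses used in Lemma~\ref{lem:request-normalization} is purely combinatorial: it does not touch $\FG$'s embedding, so the planarity/degree guarantees of Lemma~\ref{lem:properties-of-request-graphs-special-three-sat} are preserved. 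This is the one step I would be most careful about when writing the full proof, but it is immediate from the definitions.

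Finally, for Theorem~\ref{thm:beta-inapproximability} the reduction is not from $\ThreeSAT$ but from $\DirEDPwC$ via Lemmas~\ref{lem:DirEDPwC-to-var-E-N} and~\ref{lem:DirEDPwC-to-var-VE--}. In both constructions, the request graph is $\VV=\{i_k,j_k\mid k\in[l]\}$ with $\VE=\{(i_k,j_k)\mid k\in[l]\}$, i.e.\ a disjoint union of single directed edges. Such a graph is trivially planar and has maximum (undirected) node-degree $1$, so the inapproximability of Theorem~\ref{thm:beta-inapproximability} already holds under the strongest graph restriction stated, completing the theorem.
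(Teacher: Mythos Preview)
Your proposal is correct and follows essentially the same route as the paper: replace $\ThreeSAT$ by $\SpecialThreeSAT$ in all the base-lemma reductions and invoke Lemma~\ref{lem:properties-of-request-graphs-special-three-sat}, while for Theorem~\ref{thm:beta-inapproximability} observe that the request graph produced by the $\DirEDPwC$ reductions is already a disjoint union of single edges. Your extra remark that vertex $3$-connectedness of $\FG$ makes the decomposition branch of Lemma~\ref{lem:request-normalization} vacuous is a nice sharpening the paper leaves implicit.
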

\begin{proof}
Our $\NPcompleteness$ proofs in Section~\ref{sec:np-completeness-of-the-vnep} and Section~\ref{sec:np-completeness-approximate-embeddings} (except for Theorem~\ref{thm:beta-inapproximability}) relied solely on the reduction from $\ThreeSAT$ to $\VNEP$ using the base Lemma~\ref{lem:base-lemma}. 
As formulas of $\SpecialThreeSAT$ are a strict subset of the $\ThreeSAT$ formulas, the base Lemma~\ref{lem:base-lemma} is still applicable for $\SpecialThreeSAT$ formulas. However, due to the structure of $\SpecialThreeSAT$ formulas, the corresponding requests in the reductions are planar and exhibit a node-degree bound of 12 by Lemma~\ref{lem:properties-of-request-graphs-special-three-sat}. Hence, solving the $\VNEP$ is $\NPcomplete$, even when restricting the requests to planar and / or degree-bounded ones.
Lastly, we note that Theorem~\ref{thm:beta-inapproximability} holds for planar and degree 1-bounded request graphs, as in the reduction only such requests were considered.
\end{proof}

\section{Conclusion}

We presented a comprehensive set of hardness results 
for the $\VNEP$ and its variants, which lie at the core of many resource allocation problems in networks.
Our results are negative in nature: we show that the
problem variants are $\NPcomplete$ and hence inapproximable (unless $\compPeqNP$) and that this holds true even for restricted classes of request graphs.

We believe that our results are of great importance for future work on several of the virtual network embedding problems. For example, our results on the variant enforcing node placement and latency restrictions are of specific interest for Service Function Chaining.
Surprisingly, the respective problem is hard even when not considering any capacity constraints. Furthermore, we have shown that it is hard to compute embeddings satisfying latency bounds within a factor of (less than) two times the original bounds.
In turn, whenever latency bounds must be obeyed strictly, one needs to rely on exact algorithmic techniques as e.g. Integer Programming.

\section*{Acknowledgements} This work was partially supported by Aalborg University's PreLytics project as well as by the German BMBF Software Campus grant 01IS1205.

}

\end{document}